\newtheorem{definition}{Definition}
\newtheorem{remark}{Remark}
\newtheorem{problem}{Problem}
\newtheorem{lemma}{Lemma}
\newtheorem{theorem}{Theorem}
\newtheorem{proposition}{Proposition}
\newtheorem{corollary}{Corollary}
\newtheorem{assumption}{Assumption}
\newcommand{\bartt}[1]{#1'}
\title{An indirect computational procedure for receding horizon hybrid optimal control}
\author{
Babak~Tavassoli
\\
{\small K. N. Toosi University of Technology}
\\
{\small E-mail: tavassoli@kntu.ac.ir}
}
\date{}
\begin{document}

\maketitle

\begin{abstract}
In this work, solution of the finite horizon hybrid optimal control problem as the central element of the receding horizon optimal control (model predictive control) is investigated based on the indirect approach. 
The response of a hybrid system within the prediction horizon is composed of both discrete-valued sequences and continuous-valued time-trajectories. Given a cost functional, the optimal continuous trajectories can be calculated given the discrete sequences by the means of the recent results on the hybrid maximum principle. It is shown that these calculations reduce to solving a system of algebraic equations in the case of affine hybrid systems. Then, a branch and bound algorithm is proposed which determines both the discrete and continuous control inputs by iterating on the discrete sequences. It is shown that the algorithm finds the correct solution in a finite number of steps if the selected cost functional satisfies certain conditions. Efficiency of the proposed method is demonstrated during a case study through comparisons with the main existing method.
\end{abstract}

Keywords:
Hybrid Systems, Receding horizon, Model Predictive Control, Finite Horizon Optimal Control, Hybrid Maximum Principle.

\section{Introduction}
Hybrid systems include both discrete-valued and continuous-valued state variables that interact with each other \cite{R:HOP1, R:HHB, R:HS2, R:HG1}. A system with only discrete states can be described by an automaton, while a system with only continuous states can be described using differential equations. However, the interaction between the two types of state results in serious complexities in the case of hybrid systems. Considerable research has been devoted to cope with these complexities due to the increase of applications with hybrid dynamical nature in industries, energy systems, biological systems, and more generally, in the cyber-physical systems \cite{R:CPS1, R:CPS2, R:Moness2020, R:CPS3, R:HHB}.

There are different approaches to hybrid system problems. They range from the extensions of the methods for discrete systems in the computer science to the methods that extend the ideas in control theory for continuous systems. Some examples are the applications of the Lyapunov or small gain theorems \cite{R:HG1, R:HG3, R:CPS3}, verification and control of hybrid systems based on model checking or symbolic modeling \cite{R:HS2, R:Reissig2019, R:Goubault2019}, optimal control of hybrid systems \cite{R:HO1, R:HO2, R:HOP1}, and model predictive control (MPC) \cite{R:HMPC1, R:HMPC4, R:HMPC2, R:Moness2020}.
The MPC or more precisely the receding horizon optimal control method, solves a finite horizon optimal control problem at every time step in order to compute the control signals. 
Formal extension of the MPC method to hybrid systems was started in \cite{R:MLD} which applies a direct approach to solution of the optimal control problem. The direct approach is based on approximating the system response by functions of time with a finite number of parameters which reduces the optimal control problem to an optimization with a finite number of decision variables. The larger is the number of parameters, the more accurate is the approximation, but also the heavier is the load of computations. The hybrid MPC approach of \cite{R:MLD} and its following works like \cite{R:HMPC1, R:HMPC4, R:Frick2019, R:Moness2020} uses time-discretization to reduce the MPC problem to a mixed integer program (MIP) over the system variables during the prediction horizon. 

In this work, an MPC algorithm is developed based on the indirect approach to solution of the finite horizon hybrid optimal control based on the hybrid maximum principle (HMP) \cite{R:NM1, R:NM2, R:HO1}. The two main difficulties in this regard are: 1- solving the differential-algebraic system of equations given by the HMP, and 2- finding the optimal sequences of discrete state and discrete input that are assumed to be known in the HMP.
The first difficulty is tackled by reducing the HMP equations to an algebraic system of equations in terms of only the jump times within the prediction horizon for the special case of affine hybrid systems with quadratic cost functionals. 
Also, the second difficulty is addressed by proposing a branch-and-bound algorithm to compute the optimal discrete sequences iteratively. It is proved that the algorithm finds the correct solution in a finite number of steps either if the cost functional assigns cost to the jumps or the number of jumps within the prediction horizon is restricted. 
The proposed indirect computational procedure does not apply approximations which is an advantage. More importantly, the computational complexity of the indirect approach is less than the direct approach in general. 
The reason is that the proposed indirect approach is based on solving equations with a few unknown variables that are primarily the jump time instants. But, in the indirect approach a much larger set of decision variables must be defined for each sampling instant (the sampling period is typically an order of magnitude smaller than the average time interval between jumps).
To demonstrate these advantages, a case study is provided in which the proposed method is applied to the hybrid system benckmark in \cite{R:SU2} and comparisons are made with the hybrid MPC approach of \cite{R:MLD,R:HMPC4,R:HMPC1}. 

The organization of the paper is as follows. The required definitions followed by the hybrid MPC problem statement are provided in Section~\ref{S:HS}. The proposed indirect MPC approach including the utilization of the HMP and the algorithm for calculation of the discrete sequences is presented in Section~\ref{S:MPC}. Correctness and finiteness of the proposed algorithm are studied in Section~\ref{S:ALP}. Some issues that cannot be deeply investigated in this work are pointed out in Section~\ref{S:RX}. The case study is provided in Section~\ref{S:EXM} and conclusions are made at the end.

\textit{Notation}: The sets of real numbers and integers are denoted by $\mathbb{R}$ and $\mathbb{Z}$ respectively (the subsets of non-negative or positive numbers are denoted by adding $\geq 0$ or $>0$ in the subscript). Floor of $a\in\mathbb{R}$ is denoted as $\lfloor a\rfloor$.
For a function $F:A\to B$, the restriction of $F$ to $A'\subset A$ is denoted as $F|_{A'}:A'\to B$. 
The set of integers that are not less than $m\in\mathbb{Z}$ and not greater than $n\in\mathbb{Z}$ is denoted by $[m..n]$. The logical conjunction and disjunction operators are denoted by $\wedge$ and $\vee$ respectively.
For a real matrix $M\in\mathbb{R}^{m\times n}$, the element at row $i\in[1..m]$ and column $j\in[1..n]$ is denoted by $[M]_{i,j}$.
For an arbitrary set $B$, the set of all sequences of elements in $B$ indexed by $[m..n]\subset\mathbb{Z}$ is denoted by $B^{[m..n]}$. For $b\in B^{[m..n]}$, the element which corresponds to $i\in[m..n]$ is denoted by $b_i$ and the number of elements of $b$ is denoted by $|b| = n-m+1$. If $|b|=1$, then we simply write $b$ instead of $b_m = b_n$. The empty sequence is denoted by $\{\}$. 
For $b\in B^{[m..n]}$, a subsequence of $b$ is a sequence denoted as $b_{p..r}\in B^{[p..r]}$ for some $[p..r]\subset[m..n]$, such that $(b_{p..r})_i=b_i$ for all $i\in[p..r]$. It is said that $b\in B^{[m..n]}$ is a prefix of $b'\in B^{[m..r]}$ denoted as $b\prec b'$, if $r\geq n$ and $b'_{m..n}=b$. The concatenation of a sequence $b\in B^{[m..n]}$ and an element $b'\in B$ is a sequence in $B^{[m..n+1]}$ denoted as $b \, b'$ such that $(b \, b')_{m..n}=b$ and $(b \, b')_{n+1}=b'$.

\section{Hybrid MPC Problem}
\label{S:HS}
Before stating the hybrid MPC problem, hybrid systems and their time responses need to be defined in this section. 

\subsection{Definition of hybrid system}
There are various formal definitions of hybrid systems. The following definition is based on the notions of hybrid automaton in \cite{R:HHB}.

\begin{definition} 
\label{D:HS}
A hybrid system $\mathcal{H}$ is a tuple $(Q$, $\Sigma$, $\Theta$, $f$, $D$, $G$, $R)$ where
\begin{itemize}
\item $Q$ is a finite set of discrete state values,
\item $\Sigma$ is a finite set of discrete input values,
\item $\Theta\subseteq Q\times\Sigma\times Q$ is a set of transitions (jumps),
\item $D_q\in\mathbb{R}^{n_x}$ for every $q\in Q$ are domains,
\item $f_q:D_q\times\mathbb{R}^{n_u}\to \mathbb{R}^{n_x}$ for every $q\in Q$ are vector fields, 
\item $G_{q,\sigma,q'}\in\mathbb{R}^{n_x}$ for every $(q,\sigma,q')\in\Theta$ are guard sets,
\item $R_{q,\sigma,q'}:\mathbb{R}^{n_x}\to\mathbb{R}^{n_x}$ for every $(q,\sigma,q')\in\Theta$ are reset maps.
\end{itemize}
\end{definition}

Several of the problems in regard with hybrid systems are studied under the following assumption.

\begin{assumption}
\label{A:1}
Considering a hybrid system $\mathcal{H} = (Q$, $\Sigma$, $\Theta$, $f$, $D$, $G$, $R)$ according to the Definition~\ref{D:HS}, it is assumed that the functions $f$ and $R$ are differentiable. Also, for every $(q,\sigma,q')\in\Theta$ there exist a differentiable function $g_{q,\sigma,q'}:\mathbb{R}^{n_x}\to\mathbb{R}$ such that  
\begin{align}
G_{q,\sigma,q'} = \{x\in\mathbb{R}^{n_x} : g_{q,\sigma,q'}(x)\le 0\}. \label{E:PMP.10} 
\end{align}
\end{assumption}

In this paper we are particularly interested in the class of hybrid systems that can be described as an affine hybrid system in which the vector fields, reset maps, and the functions $g_{q,\sigma,q'}$ in (\ref{E:PMP.10}) take the affine forms
\begin{subequations}
\label{E:AHS}
\begin{align}
f_q(x,u)&=A_q x+B^u_q u+B^c_q, \label{E:AHS.1}\\
g_{q,\sigma,q'}(x)&=M^x_{q,\sigma,q'}x+M^c_{q,\sigma,q'}, \label{E:AHS.2}\\
R_{q,\sigma,q'}(x)&=L^x_{q,\sigma,q'}x+L^c_{q,\sigma,q'} \label{E:AHS.3}
\end{align}
\end{subequations}
for every $(q,\sigma,q')\in\Theta$ in which the matrix and vector coefficients have the appropriate dimensions. More general cases will be discussed in Subsection~\ref{SS:GPW}.

\subsection{Time response of hybrid system}
\label{SS:EX}
A change of the discrete state is referred to as a jump. An increasing sequence of time instants $t^s\in\mathbb{R}^{[0..n]}$ with $n\in\mathbb{Z}_{>0}$ can be defined such that $t^s_0$ is the initial time, $t^s_n$ is the final time, and $t^s_i$ for $i\in[1..n-1]$ are the jump instants. Both $n$ and $t^s_n$ can tend to infinity. For an arbitrary time dependent variable $y:[t^s_0,t^s_n]\to\mathbb{R}^{n_y}$ with $n_y\in\mathbb{Z}_{>0}$ the following notations are used.
\begin{subequations}
\begin{align}
y_i^- &= \ \underset{t \uparrow t^s_i}{\lim} \ y(t) && i\in[1..n] \label{E:1.1}\\
y_i^+ &= \ y(t^s_i) && i\in[0..n-1] \label{E:1.2}
\end{align}
\end{subequations}

The time response of the hybrid system which is denoted as an execution can be defined as in the following.

\begin{definition}
\label{D:EX}
An execution of a hybrid system $\mathcal{H} = (Q$, $\Sigma$, $\Theta$, $f$, $D$, $G$, $R)$ is a tuple $E = (t^s$, $q$, $\sigma$, $x$, $u)$ where 
\begin{itemize}
\item $t^s\in\mathbb{R}^{[0..n]}$ is the time sequence, 
\item $q\in Q^{[1..n]}$ is the discrete state sequence, 
\item $\sigma\in\Sigma^{[1..n-1]}$ is the discrete input sequence,
\item $x:[t^s_0,t^s_n]\to\mathbb{R}^{n_x}$ is the continuous state trajectory, 
\item $u:[t^s_0,t^s_n]\to\mathbb{R}^{n_u}$ is the continuous input trajectory,
\end{itemize}
\noindent
for some $n\in\mathbb{Z}_{>0}$, such that $t^s_i>t^s_{i-1}$ for $i\in[1..n-1]$, $t^s_n\geq t^s_{n-1}$, relations (\ref{E:2.1}) and (\ref{E:2.5}) hold for $t\in[t^s_{i-1},t^s_i)$ with $i\in[1..n]$, and relations (\ref{E:2.4}) through (\ref{E:2.2}) hold for $i\in[1..n-1]$. 
\begin{subequations}
\label{E:2}
\begin{align}
&\dot{x}(t)=f_{q_i}(x(t),u(t)) \label{E:2.1}\\
&x(t)\in D_{q_i} \label{E:2.5}\\
&(q_i,\sigma_i,q_{i+1})\in\Theta \label{E:2.4} \\
&x_i^-\in G_{q_i,\sigma_i,q_{i+1}} \label{E:2.3}\\
&x_i^+=R_{q_i,\sigma_i,q_{i+1}}(x_i^-) \label{E:2.2} 
\end{align}
\end{subequations}
\end{definition}

The set of all executions of a hybrid system $\mathcal{H}$ is denoted by $\mathcal{E(H)}$. The set of executions that satisfy $t^s_0=0$, and $x(0) = x_{ic}$ is denoted by $\mathcal{E(H}$, $x_{ic})$. Also, $\mathcal{E(H}$, $x_{ic}$, $T)$ denotes the set of executions in $\mathcal{E(H}$, $x_{ic})$ that satisfy $t^s_n=T$.

During the time interval $[t^s_{i-1},t^s_i)$ with $i\in[1..n]$, the discrete state is $q_i$, and the continuous state evolves according to (\ref{E:2.1}). This type of evolution of the state is denoted as a flow. The time instant for the $i$th jump $t^s_i$ for $i\in[1..n-1]$ is determined as the time at which $x(t)$ reaches the boundary of $G_{q_i,\sigma_i,q_{i+1}}$ according to (\ref{E:2.3}). In this way, we avoid a kind of uncertainty when both flow and jump are possible at the same time by giving priority to jumps. At the time instant of jump, the continuous state is reset according to (\ref{E:2.2}). 

It is said that an execution $E = (t^s$, $q$, $\sigma$, $x$, $u)$ with $|q| = n$ is a prefix of another execution $E' = ({t^s}'$, $q'$, $\sigma'$, $x'$, $u')$, if we have $q\prec q'$, $\sigma\prec\sigma'$, $t^s_{0..n-1} = t^{s\prime}_{0..n-1}$, $t^s_n\leq {t^s_n}'$, $x = x'|_{[t^s_0,t^s_n]}$, and $u = u'|_{[t^s_0,t^s_n]}$.

To avoid confusion, it is mentioned that the notion of execution of a hybrid system is apart from the concept of {\it controller execution} which will be used to denote a run of the control algorithm.

\subsection{The hybrid MPC problem statement}
\label{SS:PB}

The MPC algorithm solves an optimal control problem at every time step over a finite horizon which starts from the current time $t$ and ends at $t+T_h$ in future. Then, the part of the calculated input which corresponds to the current time is applied to the plant and the rest of the calculated values are neglected. This procedure is repeated with a controller execution period of $T_c$ in order to achieve a desirable control performance. The interval $[t,t+T_h]$ (or sometimes its length $T_h$) is referred to as the prediction horizon.

Considering an execution $E = (t^s$, $q$, $\sigma$, $x$, $u)$, the cost functional $J$ to be minimized is defined as in the following. 
\begin{subequations}
\label{E:JC}
\begin{align}
J(E) =& J_m(E) + h^e_{q_n}(x^-_n) \label{E:JC.2} \\[6pt]
J_m(E) =& \sum_{i=1}^n \int_{t^s_{i-1}}^{t^s_i} l_{q_i}(x(t),u(t))dt \ + \sum_{i=1}^{n-1} h_{i,q_i,\sigma_i,q_{i+1}}(x^-_i) \label{E:JC.1} 
\end{align}
\end{subequations}

In the above definition, $l_q:\mathbb{R}^{n_x}\times\mathbb{R}^{n_u}\to\mathbb{R}_{\geq 0}$, $h_{i,q,\sigma,q'}:\mathbb{R}^{n_x}\to\mathbb{R}_{\geq 0}$, and $h^e_q:\mathbb{R}^{n_x}\to\mathbb{R}_{\geq 0}$ are differentiable functions for every $i\in\mathbb{Z}_{>0}$, $(q,\sigma,q')\in\Theta$. 

In this work, we are particularly interested in a cost functional with quadratic elements as below (the matrix coefficients have the appropriate dimensions).
\begin{subequations}
\label{E:QJ}
\begin{align}
l_q(x,u)&={\scriptstyle\frac{1}{2}}[(x-\bar{x}_q)^T W^x_q (x-\bar{x}_q) + (u-\bar{u}_q)^T W^u_q (u-\bar{u}_q) + W^c_q] \label{E:QJ.1}\\
h_{i,q,\sigma,q'}(x)&=W^j_{i,q,\sigma,q'}+(x-\bar{x}_q)^T W^{jx}_{i,q,\sigma,q'} (x-\bar{x}_q) \label{E:QJ.2}\\
h^e_q(x)&={\scriptstyle\frac{1}{2}}(x-\bar{x}_q)^T W^e_q (x-\bar{x}_q) \label{E:QJ.3}
\end{align}
\end{subequations}

The optimal control problem that should be solved by the MPC algorithm at the time $t$ is as the Problem~\ref{PB:1} in the following. 

\begin{problem}
\label{PB:1}
Given a hybrid system $\mathcal{H} = (Q$, $\Sigma$, $\Theta$, $f$, $D$, $G$, $R)$ according to the Definition~\ref{D:HS}, a cost functional $J$ as in (\ref{E:JC}), prediction horizon $T_h\in\mathbb{R}_{>0}$, initial states $q_{ic}\in Q$ and $x_{ic}\in\mathbb{R}^{n_x}$, find an execution $E = (t^s$, $q$, $\sigma$, $x$, $u)\in$ $\mathcal{E(H}$, $x_{ic}$, $T_h)$ with $q_1=q_{ic}$ which minimizes $J$. 
\end{problem}

Due to the time invariance of the hybrid dynamics in (\ref{E:2}), the current time is shifted to the origin for simplicity such that $E\in\mathcal{E(H}$, $x_{ic}$, $T_h)$. The values of $x_{ic}$ and $q_{ic}$ must be respectively set to the values of continuous and discrete states at the current time $t$. After solving the problem and obtaining $E$, the continuous input $u(0)=u_0^+$ and the discrete input $\sigma_1$ must be applied to the hybrid system $\mathcal{H}$ as the plant. 
At an instant between two runs of the MPC algorithm at $t$ and $t+T_c$ denoted by $t'\in[t,t+T_c]$, one can alternatively apply $u(t'-t)$ and the calculated discrete input at $t'-t$ instead of $u(0)$ and $\sigma_1$ in order to improve accuracy.

\section{Hybrid MPC Algorithm}
\label{S:MPC}
This section, aims to develop the indirect hybrid MPC algorithm that should be run at each time step. The HMP is used in order to solve the underlying optimal control problem. It will be assumed that the feedback from the state variables is available.

\subsection{Calculating continuous trajectories given discrete sequences}
By fixing the discrete sequences of the executions in Problem~\ref{PB:1}, the Problem~\ref{PB:2} in the following is obtained. 

\begin{problem}
\label{PB:2}
Given a hybrid system $\mathcal{H} = (Q$, $\Sigma$, $\Theta$, $f$, $D$, $G$, $R)$ according to the Definition~\ref{D:HS}, a cost functional $J$ as in (\ref{E:JC}), prediction horizon $T_h\in\mathbb{R}_{>0}$, initial continuous state $x_{ic}\in\mathbb{R}^{n_x}$, and sequences $q\in Q^{[1..n]}$, $\sigma\in\Sigma^{[1..n-1]}$ for some $n\in\mathbb{Z}_{>0}$, find an execution $E\in$ $\mathcal{E(H}$, $x_{ic}$, $T_h)$ with discrete state sequence $q$ and discrete input sequence $\sigma$ which minimizes $J$. 
\end{problem}

Since jump is assumed to have priority with respect to flow, a jump occurs if the continuous state reaches the boundary of the corresponding guard set. Hence, (\ref{E:2.3}) together with (\ref{E:PMP.10}) results in
\begin{equation}
\label{E:GB}
g_{q_i,\sigma_i,q_{i+1}}(x_i^-)=0 \qquad i\in[1..n-1]
\end{equation}

The HMP has been presented in various forms in the previous works (see for example \cite{R:HO3, R:OC1, R:HO2, R:HO1, R:HO4}). The one which is more useful in here is provided in \cite{R:HO2} that can be represented as below.

\begin{proposition}
\label{TH:1}
Given a hybrid system $\mathcal{H} = (Q$, $\Sigma$, $\Theta$, $f$, $D$, $G$, $R)$ which satisfies the Assumption~\ref{A:1}, if an execution of $\mathcal{H}$ denoted by $E = (t^s$, $q$, $\sigma$, $x$, $u)$ solves the Problem~\ref{PB:2} for given sequences $q$ and $\sigma$, then there exist $\alpha_i\in\mathbb{R}$ for $i\in[1..n-1]$ with $n=|q|$ and $\lambda:[t^s_0,t^s_n]\to\mathbb{R}^{n_x}$ denoted as the costate such that the set of equations (\ref{E:PMPc}) for $i\in[1..n]$, (\ref{E:PMPj}) for $i\in[1..n-1]$, and (\ref{E:PMPf.1}) are satisfied with the Hamiltonian function $H$ defined in (\ref{E:PMPh}).
\end{proposition}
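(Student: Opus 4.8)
The plan is to treat Problem~\ref{PB:2} as a multi-phase optimal control problem with a fixed phase sequence and interior-point constraints at the switching instants, and to derive the stated conditions as the vanishing of the first variation of an augmented cost functional; this specializes the hybrid maximum principle of \cite{R:HO2} to the present setting. Since the discrete sequences $q$ and $\sigma$ are held fixed, the number of jumps $n=|q|$ is fixed and the only free quantities are the continuous input $u(\cdot)$, the continuous trajectory $x(\cdot)$, and the switching times $t^s_i$ for $i\in[1..n-1]$ (the final time $t^s_n=T_h$ being fixed by the requirement $E\in\mathcal{E(H},x_{ic},T_h)$). First I would adjoin the flow dynamics (\ref{E:2.1}) on each interval $(t^s_{i-1},t^s_i)$ to the cost through a costate $\lambda(t)$, and adjoin the guard-boundary conditions (\ref{E:GB}) through scalar multipliers $\alpha_i$, thereby forming a Lagrangian $\bar J$. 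The Hamiltonian $H(q_i,x,u,\lambda)=l(q_i,u,x)+\lambda^\top f(q_i,x,u)$ introduced in (\ref{E:PMPh}) lets $\bar J$ be written compactly phase by phase.

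Next I would compute the first variation $\delta\bar J$ induced by admissible perturbations $\delta u(\cdot)$, $\delta x(\cdot)$, and $\delta t^s_i$, and require it to vanish at the optimal execution. Performing integration by parts on the term $\int\lambda^\top\dot x\,dt$ over each phase and collecting the coefficient of the interior variation $\delta x(t)$ yields the costate differential equation $\dot\lambda=-\partial H/\partial x$, while the coefficient of $\delta u(t)$ gives the stationarity condition $\partial H/\partial u=0$; together these constitute (\ref{E:PMPc}) for $i\in[1..n]$.

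The boundary contributions of the integration by parts, evaluated at each $t^s_i$, must then be combined with the variations of the reset relation (\ref{E:2.2}) and the guard constraint (\ref{E:GB}). Differentiating $x_i^+=R(q_i,\sigma_i,q_{i+1},x_i^-)$ couples $\delta x_i^+$ to $\delta x_i^-$ through the Jacobian of $R$, and the active constraint $g_{q_i,\sigma_i,q_{i+1}}(x_i^-)=0$ contributes the term $\alpha_i\,\partial g_{q_i,\sigma_i,q_{i+1}}/\partial x$; requiring the coefficient of the independent variation $\delta x_i^-$ to vanish produces the costate jump condition relating $\lambda_i^-$ to $\lambda_i^+$ through this Jacobian together with the gradient of the switching cost $h$ in (\ref{E:JC.1}), i.e. (\ref{E:PMPj}). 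At the terminal time, the remaining boundary term at $t^s_n=T_h$ gives the transversality condition $\lambda_n^-=\partial h_f(q_n,x_n^-)/\partial x$, which is (\ref{E:PMPf.1}).

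The main obstacle is the correct treatment of the free switching times $t^s_i$ together with the state discontinuity imposed by the reset map. Because $x$ is discontinuous across $t^s_i$ and the guard condition ties the pre-jump state $x_i^-$ to the instant at which the flow reaches $\partial G$, a perturbation $\delta t^s_i$ simultaneously shifts the switching surface and the integration limits, so the first variation mixes $\delta t^s_i$ with $\delta x_i^-$; these must be separated so that the coefficients of the genuinely independent variations vanish one at a time. It is precisely this bookkeeping that both introduces the multipliers $\alpha_i$ and fixes the accompanying Hamiltonian relation contained in (\ref{E:PMPj}), and it is where the differentiability hypotheses of Assumption~\ref{A:1} and the jump-priority convention underlying (\ref{E:GB}) are indispensable. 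With these pieces assembled, the argument reproduces the specialization of the hybrid maximum principle of \cite{R:HO2} to Problem~\ref{PB:2}.
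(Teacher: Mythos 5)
Your route is genuinely different from the paper's: the paper does not rederive this proposition at all, but obtains it by direct citation of the hybrid maximum principle of \cite{R:HO2}, adding only the remark that the discrete input $\sigma$ is inserted trivially because it enters the conditions solely through the discrete state sequence $q$, which is fixed in Problem~\ref{PB:2}. Your plan of a self-contained first-variation derivation---adjoining the flow (\ref{E:2.1}) with the costate $\lambda$, adjoining the active guard constraints (\ref{E:GB}) with multipliers $\alpha_i$, integrating by parts phase by phase, and carefully separating $\delta t^s_i$ from $\delta x_i^-$ at each switching surface---is the standard multi-phase/interior-point-constraint treatment, and your bookkeeping correctly accounts for the costate jump (\ref{E:PMPj.1}), the Hamiltonian continuity (\ref{E:PMPj.2}) arising from the coefficient of $\delta t^s_i$ (valid here because the guard and the switching cost $h$ are time-independent), and the transversality condition (\ref{E:PMPf.1}) under the fixed final time $t^s_n=T_h$.

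However, one step fails as written. The vanishing of the first variation under weak perturbations $\delta u(\cdot)$ yields only the stationarity condition $D_u H(q_i,x,u,\lambda)=0$, whereas (\ref{E:PMPc.2}) asserts the global pointwise inequality $H(q_i,x,\lambda,u)\leq H(q_i,x,\lambda,\bar{u})$ for \emph{all} $\bar{u}\in\mathbb{R}^{n_u}$. Under Assumption~\ref{A:1} the data are merely differentiable, with no convexity of $l$ or $f$ in $u$, so a stationary point of $H$ in $u$ need not be a minimizer at all---it could be a local maximum or a saddle---and your argument therefore proves a strictly weaker conclusion than the proposition states. Recovering (\ref{E:PMPc.2}) requires strong (needle) variations in the interior of each flow interval, as in Pontryagin's original argument and in the hybrid version of \cite{R:HO2}, or an added convexity hypothesis (which does hold in the affine--quadratic specialization (\ref{E:QJ}), where $W^u_q$ is positive definite and (\ref{E:US}) realizes the minimum, but the proposition is stated for the general case). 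So either replace the weak-variation step for $u$ by a needle-variation argument, or follow the paper and invoke \cite{R:HO2} directly, confining the new content of the proof to the discrete-input remark.
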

\begin{subequations}
\label{E:PMPc}
\begin{align}
&\dot{\lambda}(t) = - D^T_x H_{q_i}(x,u,\lambda) \quad t\in[t^s_{i-1}, t^s_i)~, \label{E:PMPc.1}\\
&H_{q_i}(x,\lambda,u) \leq H_{q_i}(x,\lambda,w) \quad \forall w\in\mathbb{R}^{n_u}~, \label{E:PMPc.2}
\end{align}
\end{subequations}

\begin{subequations}
\label{E:PMPj}
\begin{align}
&\lambda_i^- = D^T_x R_{q_i}(\sigma_i,q_{i+1},x_i^-) \ \lambda_i^+ + 
D^T_x h_{i,q_i,\sigma_i,q_{i+1}}(x_i^-) + \alpha_i D^T_x g_{q_i,\sigma_i,q_{i+1}}(x_i^-)~, \label{E:PMPj.1} \\[6pt]
&H_{q_i}(x_i^-,\lambda_i^-,u_i^-) = H_{q_{i+1}}(x_i^+,\lambda_i^+,u_i^+)~, \label{E:PMPj.2} 
\end{align}
\end{subequations}
\begin{align}
&\lambda_{n}^- = D^T_x h_{f,q_n}(x_{n}^-)~, \label{E:PMPf.1} 
\end{align}
\begin{equation}
\label{E:PMPh}
H_q(x,u,\lambda) = l_q(x,u) + \lambda^T f_q(x,u) ~.
\end{equation}

In the above equations, $D^T_x$ denotes the transpose of the Jacobian matrix with respect to $x$ which becomes the gradient column vector for scalar-valued functions.

The conditions given in the Proposition~\ref{TH:1} together with (\ref{E:2.1}), (\ref{E:2.2}), and (\ref{E:GB}) constitute a differential-algebraic system of equations that can be solved for $x$, $u$, and $t^s_i$ for $i\in[1..n-1]$ (in order to solve the Problem~\ref{PB:2}). In general, the solution can be obtained by using the numerical methods for hybrid optimal control based on the HMP \cite{R:NM1, R:NM2, R:HO1}. However, the solution process becomes considerably easier for the class of affine hybrid systems as explained in the next part.

\begin{remark}
\label{RM:1}
A special type of jump which is sometimes referred to as controlled switching \cite{R:HO1, R:HO2}, is when $G_{q,\sigma,q'}=\mathbb{R}^{n_x}$ for some $(q,\sigma,q')\in\Theta$. In this case, the controller is free to make the jump $(q,\sigma,q')$ at every time instant in which the discrete state is $q$ and the discrete input is $\sigma$. For this purpose, $g_{q,\sigma,q'}$ can be defined to be zero at every point such that (\ref{E:GB}) is always satisfied. In the case of affine hybrid systems, the matrices $M^x_{q,\sigma,q'}$ and $M^c_{q,\sigma,q'}$ in (\ref{E:AHS.2}) are set to zero matrices.
\end{remark}

\subsection{The case of affine hybrid systems}
\label{SS:AHS}
The Problem~\ref{PB:2} can be solved much more efficiently in the case of affine hybrid systems with the cost functional (\ref{E:JC}) which has quadratic elements in the form of (\ref{E:QJ}).

Minimization of $H$ according to (\ref{E:PMPc.2}) gives 
\begin{equation}
u=\bar{u}_{q_i} - W^{u^{-1}}_{q_i} {B^u_{q_i}}^T\lambda \qquad \forall \ i\in[1..n] \label{E:US}
\end{equation}

By replacing $u_i$ from the above equation in (\ref{E:2.1}) and (\ref{E:PMPc.1}) for the affine case in (\ref{E:AHS}) and (\ref{E:QJ}), we have two coupled differential equations that can be written as the following for $i\in[1..n]$.
\begin{subequations}
\label{E:LPr}
\begin{align}
&\frac{d}{dt}\begin{bmatrix} x \\ \lambda\\ 1 \end{bmatrix} = A^e_{q_i} \begin{bmatrix} x \\ \lambda\\ 1 \end{bmatrix}, \quad t\in[t^s_{i-1}, t^s_i) \label{E:LP.4} \\
&A^e_{q}=\begin{bmatrix}A_{q} & -B^u_{q} W^{u^{-1}}_{q}{B^u_{q}}^T & B^c_{q}+B^u_{q}\bar{u}_{q} \\-W^x_{q}&-A^T_{q}& W^x_{q}\bar{x}_{q}\\0&0&0\end{bmatrix} \label{E:LP.3} 
\end{align}
\end{subequations}

The above differential equation is solved as 
\begin{subequations}
\label{E:LP}
\begin{align}
&\begin{bmatrix}x_i^-\\\lambda_i^-\\1\end{bmatrix}=\Psi_{q_i}(t^s_i-t^s_{i-1})\begin{bmatrix}x_{i-1}^+\\\lambda_{i-1}^+\\1\end{bmatrix} \quad \forall \ i\in[1..n] \label{E:LP.1}\\
&\Psi_{q}(\alpha) = \text{exp}\left(A^e_{q}\alpha\right) \quad \forall q\in Q, \alpha\in\mathbb{R}\,. \label{E:LP.2} 
\end{align}
\end{subequations}

Also, equation (\ref{E:GB}) is written as 
\begin{align}
&M^x_{q_i,\sigma_i,q_{i+1}}x_i^- + M^c_{q_i,\sigma_i,q_{i+1}} = 0 \quad \forall \ i\in[1..n-1]. \label{E:LPM}
\end{align}

The equations (\ref{E:LP}) and (\ref{E:LPM}) together with $x(t^s_0)=x_0^+=x_{ic}$, (\ref{E:2.2}), (\ref{E:PMPj.1}), (\ref{E:PMPf.1}) with the special forms of the functions in (\ref{E:AHS}) and (\ref{E:QJ}), constitute a system of linear equations in terms of the set of unknowns in $\mathcal{Y}_a$ defined as
\begin{align}
\mathcal{Y}_a =& \big(x_1^-,\cdots,x_n^-,\lambda_1^-,\cdots,\lambda_n^-, \nonumber\\ 
&x_0^+,\cdots,x_{n-1}^+,\lambda_0^+,\cdots,\lambda_{n-1}^+,\alpha_1,\cdots,\alpha_{n-1}). \label{E:FS.2} 
\end{align}

The mentioned system of linear equations can be solved by a matrix inversion. The closed form solution can be represented in terms of $t^s_i$ for $i\in[1..n-1]$ as 
\begin{align}
\mathcal{Y}_a =& F_s^a\big(t^s_1,\cdots,t^s_{n-1}\big) \label{E:FS.1} 
\end{align}
with some $F_s^a:\mathbb{R}^{n-1}\to\mathbb{R}^{4n n_x+n-1}$.

Considering that $u_i^+$ and $u_i^-$ for $i\in[1..n-1]$ are obtained from $\lambda_i^+$ and $\lambda_i^-$ according to (\ref{E:US}), the equation (\ref{E:PMPj.2}) for $i\in[1..n-1]$ can be represented as 
\begin{subequations}
\label{E:FH}
\begin{align}
&F_h^a(\mathcal{Y}_a) = 0, \label{E:FH.1} \\
&\big[F_h^a(\mathcal{Y}_a)\big]_i = H_{q_i}(x_i^-,\lambda_i^-,u_i^-) - H_{q_{i+1}}(x_i^+,\lambda_i^+,u_i^+).  \label{E:FH.2}
\end{align}
\end{subequations}

Replacing $\mathcal{Y}_a$ in (\ref{E:FH.1}) from (\ref{E:FS.1}), we arrive at the set of algebraic equations  
\begin{equation}
F_t^a\big(t^s_1,\cdots,t^s_{n-1}\big)=0 \label{E:FTa}
\end{equation}
with $F_t^a=F_h^a\circ F_s^a$ that can be solved for $t^s_i$, $i\in[1..n-1]$.

Then, $\mathcal{Y}_a$ is obtained from (\ref{E:FS.1}) which allows for computing the remaining elements of the optimal execution. 

\subsection{Calculating the discrete elements}
\label{SS:DC}
In order to apply the indirect MPC to a hybrid system, the Problem~\ref{PB:1} is solved by an algorithm in this part which iterates on the discrete sequences. It solves a number of subproblems either in the form of the Problem~\ref{PB:2} in the previous part or the Problem~\ref{PB:3} defined in the following. 

\begin{problem}
\label{PB:3}
Given a hybrid system $\mathcal{H} = (Q$, $\Sigma$, $\Theta$, $f$, $D$, $G$, $R)$ according to the Definition~\ref{D:HS}, a cost functional $J_m$ as in (\ref{E:JC.1}), initial continuous state $x_{ic}\in\mathbb{R}^{n_x}$, and sequences $q\in Q^{[1..n]}$ and $\sigma\in\Sigma^{[1..n-1]}$ with $n\in\mathbb{Z}_{>0}$, find an execution $E\in$ $\mathcal{E(H}$, $x_{ic})$ with discrete state sequence $q$ and discrete input sequence $\sigma$ which minimizes $J_m$ while satisfying 
\begin{equation}
\label{E:TP3}
t^s_n = t^s_{n-1}. 
\end{equation}
\end{problem}

The above problem is different from the Problem~\ref{PB:2} in that the terminal cost is eliminated and the constraint $t^s_n = T_h$ is replaced with (\ref{E:TP3}). 
The solution of Problem~\ref{PB:3} can be derived from the more general results such as \cite{R:HO4, R:HO3} which requires a considerable space. Another approach is to derive the solution directly from the Proposition~\ref{TH:1} for sufficiently large value of $T_h$ as in the following.

\begin{proposition}
\label{TH:2}
Given a hybrid system $\mathcal{H} = (Q$, $\Sigma$, $\Theta$, $f$, $D$, $G$, $R)$ which satisfies the Assumption~\ref{A:1}, if an execution of $\mathcal{H}$ denoted by $E = (t^s$, $q$, $\sigma$, $x$, $u)$ solves the Problem~\ref{PB:3} for the sequences $q$ and $\sigma$, then there exist $\alpha_i\in\mathbb{R}$ for $i\in[1..n-1]$ with $n=|q|$ and $\lambda:[t^s_0,t^s_n]\to\mathbb{R}^{n_x}$ such that the set of equations (\ref{E:PMPc}) and (\ref{E:PMPj}) for $i\in[1..n-1]$ are satisfied together with
\begin{align}
\lambda_{n-1}^+ &= 0. \label{E:LT2} 
\end{align}
\end{proposition}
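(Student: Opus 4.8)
The plan is to derive Proposition \ref{TH:2} directly from Proposition \ref{TH:1}, rather than repeating the variational/maximum-principle argument. The guiding observation is that the constraint (\ref{E:TP3}) collapses the terminal flow interval $(t^s_{n-1},t^s_n)$ to a single point, so that Problem \ref{PB:3} is precisely Problem \ref{PB:2} with the terminal cost deleted ($h_f\equiv 0$, so $J=J_m$) and with a free terminal time realized through a degenerate last interval. I would exploit this to turn a minimizer of Problem \ref{PB:3} into a minimizer of one specific instance of Problem \ref{PB:2}, and then read off (\ref{E:LT2}) from the terminal condition (\ref{E:PMPf.1}) together with the degeneracy of the last interval.

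Concretely, let $E=(t^s,q,\sigma,x,u)$ solve Problem \ref{PB:3} and set $T^\ast:=t^s_{n-1}=t^s_n$. I would consider the instance of Problem \ref{PB:2} with the same sequences $q,\sigma$, horizon $T_h=T^\ast$, and $h_f\equiv 0$. Since $t^s_n=T^\ast$, the execution $E$ is feasible for this instance, and because $h_f\equiv 0$ its objective value there equals $J_m(E)$. This is the sense in which the statement's ``sufficiently large $T_h$'' enters: the horizon is taken equal to the free, and possibly large, optimal terminal time $T^\ast$ of Problem \ref{PB:3}, at which $E$ remains admissible for the fixed-horizon Problem \ref{PB:2}.

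The substantive step is to show that $E$ is optimal for this instance of Problem \ref{PB:2}. For any competitor $E'$ feasible for the instance, with last jump instant $\theta\in(0,T^\ast]$ and continuous parts $x',u'$, I would truncate $E'$ at $\theta$, replacing its terminal interval by a degenerate one with final time $\theta$. The truncated execution $E''$ keeps the same first $n-1$ flows, guard conditions (\ref{E:GB}), resets (\ref{E:2.2}) and jump costs $h(i,\cdot)$, hence is feasible for Problem \ref{PB:3}; and since $l\ge 0$, one has $J_m(E'')=J_m(E')-\int_{\theta}^{T^\ast}l(q_n,u',x')\,dt\le J_m(E')$. Optimality of $E$ for Problem \ref{PB:3} then gives $J_m(E)\le J_m(E'')\le J_m(E')$, so $E$ beats every competitor and therefore solves the instance of Problem \ref{PB:2}. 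Proposition \ref{TH:1} now furnishes multipliers $\alpha_i$ and a costate $\lambda$ satisfying (\ref{E:PMPc}) for $i\in[1..n]$, (\ref{E:PMPj}) for $i\in[1..n-1]$, and (\ref{E:PMPf.1}), which with $h_f\equiv 0$ reads $\lambda_n^-=0$.

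Restricting (\ref{E:PMPc}) and (\ref{E:PMPj}) to $i\in[1..n-1]$ yields the first two groups claimed in Proposition \ref{TH:2} verbatim, and for (\ref{E:LT2}) I would invoke the degeneracy of the terminal interval: equation (\ref{E:PMPc.1}) for $i=n$ is posed on the single point $t^s_{n-1}=t^s_n$, so the costate does not evolve across it and $\lambda_{n-1}^+=\lambda_n^-=0$. I expect the truncation/optimality step to be the main obstacle, since it must be argued at the level of executions: one has to check that the truncated $E''$ is a genuine element of $\mathcal{E(H},x_{ic})$ — in particular that the reset following jump $n-1$ lands in $\bar{D}(q_n)$, so that the degenerate terminal flow is admissible, which is exactly what the standing assumptions (\ref{E:RW}) guarantee — and that no competitor with a non-degenerate terminal interval can undercut $E$, which is precisely what the non-negativity of $l$ rules out in the truncation estimate.
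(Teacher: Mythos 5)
Your proposal is correct and follows exactly the route the paper itself announces (its printed proof is omitted, but the text preceding Proposition \ref{TH:2} says the result is obtained from Proposition \ref{TH:1} rather than by redoing the variational argument): you realize the ``sufficiently large $T_h$'' as the optimal free terminal time $T^\ast=t^s_{n-1}=t^s_n$, transfer optimality from Problem \ref{PB:3} to that instance of Problem \ref{PB:2} with $h_f\equiv 0$ by the truncation estimate using $l\geq 0$, and then read (\ref{E:LT2}) off (\ref{E:PMPf.1}). The one point you should keep explicit is that on the zero-length final interval $\lambda_n^-$ must be interpreted as the endpoint value of the $n$th arc, namely $\lambda_{n-1}^+$ --- the convention the paper itself uses, since (\ref{E:LP.1}) with $\Psi_{q_n}(0)$ gives $\lambda_n^-=\lambda_{n-1}^+$ --- because the literal left-limit definition (\ref{E:1.1}) applied at $t^s_n=t^s_{n-1}$ would instead return the pre-jump value $\lambda_{n-1}^-$, under which your final step would not yield (\ref{E:LT2}).
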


\begin{proof}
First, we choose a set $\bar{Q}$ such that $Q\cap\bar{Q}=\emptyset$ and there exist a one to one mapping $\eta:Q\to\bar{Q}$. Then, the hybrid system $\mathcal{H}$ is extended to $\mathcal{H}_e = (Q_e,\Sigma,f,D,G,R)$ with $Q_e=Q\cup\bar{Q}$. The functions $f$, $D$, $G$, $R$, and $h$ are extended such that they assign the same values to $q\in Q$ and $\eta(q)$ in each of their arguments. We also extend $l$ and $h^e$ as $l_{\bar{q}}(\cdot,\cdot) = 0$ and $h^e_{\bar{q}}(\cdot) = 0$ for every $\bar{q}\in\bar{Q}$. 

It is assumed that $T_h$ is larger than the final time of $E$. We denote by $\mathcal{F}$ the subset of executions in $\mathcal{E(H},x_{ic})$ for which the discrete state sequence is $q$, the discrete input sequence is $\sigma$, the final time is less than $T_h$, and (\ref{E:TP3}) is satisfied. Also, we denote by $\mathcal{F}_e$ the set of executions in $\mathcal{E(H}_e,x_{ic},T_h)$ for which the discrete state sequence is $q^e = q_{1..n-1}\,\eta(q_n)$ and the discrete input sequence is $\sigma$.
Then, a mapping $\pi:\mathcal{F}_e\to\mathcal{F}$ can be defined which assigns $(t^s,q,\sigma,x|_{[t^s_0,t^s_n]},u|_{[t^s_0,t^s_n]})\in\mathcal{F}$ to $(t^{se},q^e,\sigma,x,u)\in\mathcal{F}_e$ with $t^{se}_{1..n-1} = t^s_{1..n-1}$. This mapping is surjuctive, because one can construct an element of $\mathcal{F}_e$ given an element of $\mathcal{F}$ by arbitrarily selecting $u$ over $(t^s_{n-1},t^{se}_n]$ (considering that $t^{se}_n\geq t^{se}_{n-1}=t^s_{n-1}$). Hence, the execution $E$ which solves the Problem~\ref{PB:3} for $\mathcal{H}$ with the discrete sequences $q$ and $\sigma$ can be represented as $E=\pi(E_e)$ for some $E_e\in\mathcal{F}_e$. 

It can be easily verified that $J_m(\pi(E'_e)) = J(E'_e)$ for every $E'_e\in\mathcal{F}_e$ by the construction of $\mathcal{H}_e$ and $E_e$. Therefore, $E_e$ must solve the Problem~\ref{PB:2} for $\mathcal{H}_e$ with the discrete sequences $q^e$, $\sigma$. Because, if there exist $E'_e\in\mathcal{F}_e$ which gives $J(E'_e)<J(E_e)$, then we have $J_m(\pi(E'_e))<J_m(\pi(E_e))$ which contradicts with the assumption that $\pi(E_e)=E$ solves the Problem~\ref{PB:3}.

Applying the Proposition~\ref{TH:1} to $E_e$, it is concluded that (\ref{E:PMPc}) for $i\in[1..n]$, (\ref{E:PMPj}) for $i\in[1..n-1]$, and (\ref{E:PMPf.1}) hold for the extended system $\mathcal{H}_e$. By the construction of $\mathcal{H}_e$, the equations (\ref{E:PMPc}) for $i\in[1..n-1]$ and (\ref{E:PMPj}) for $i\in[1..n-1]$ are in terms of the elements of the original system $\mathcal{H}$. This proves the result except for the equation (\ref{E:LT2}). The equations (\ref{E:PMPc.1}) for $i=n$ and (\ref{E:PMPf.1}) are written as $\dot{\lambda} = - D^T_x f(\bar{q}_n,x,u)\lambda$ for $t\in[t^s_{n-1},t^{se}_{n})$ and $\lambda(t^{se}_{n})=0$ which can be solved as $\lambda(t)=0$ over $t\in[t^s_{n-1},t^{se}_{n}]$ to obtain (\ref{E:LT2}).
\end{proof}

Solution of the Problem~\ref{PB:3} in the case of affine hybrid systems is obtained by modifying the set of linear equations in the Subsection~\ref{SS:AHS} that must be solved to obtain (\ref{E:FS.1}). The modification includes removing (\ref{E:LP.1}) for $i=n$ from the set of equations, and correspondingly removing $x_n^-$ and $\lambda_n^-$ from the set of unknowns $\mathcal{Y}_a$ in (\ref{E:FS.2}) to obtain a new set of unknowns $\mathcal{Y}_b$. Also, the equations $t^s_n = T_h$ and (\ref{E:PMPf.1}) are replaced with (\ref{E:TP3}) and (\ref{E:LT2}). The modified version of (\ref{E:FS.1}) for Problem~\ref{PB:3} is written as 
\begin{align}
\mathcal{Y}_b =& F_s^b\big(t^s_1,\cdots,t^s_{n-1}\big). \label{E:FS.1m} 
\end{align}

The function $F_h^a$ in (\ref{E:FH}) should be also modified to a new function $F_h^b$ which accepts $\mathcal{Y}_b$ as its argument and has the same definition as in (\ref{E:FH.2}). Then, the Equation (\ref{E:FTa}) becomes 
\begin{equation}
F_t^b\big(t^s_1,\cdots,t^s_{n-1}\big)=0 \label{E:FTb}
\end{equation}
in which $F_t^b=F_h^b\circ F_s^b$. 

Applications of the propositions \ref{TH:1} and \ref{TH:2} for solving the problems \ref{PB:2} and \ref{PB:3} in the case of affine hybrid systems, are respectively represented as the subroutines $\mathtt{JPMPa}$ and $\mathtt{JPMPb}$ in the following. 
Efficient numerical procedures for calculation of $\mathcal{Y}_a$, $\mathcal{Y}_b$, $J$, and $J_m$ that are the basic operations in $\mathtt{JPMPa}$ and $\mathtt{JPMPb}$ are proposed in the appendices.
Using these functions, the hybrid MPC calculations are accomplished according to the Algorithm~\ref{ALG:1} in the following which is based on the branch and bound method. 

\begin{algorithm2e}[!btp]
\SetAlgoHangIndent{1em}
\SetKwFunction{JPMPa}{JPMPa}
\SetKwFunction{JPMPb}{JPMPb}
\SetKwProg{Fn}{function}{}{end}
\Fn {$\JPMPa(x_{ic},\sigma,q)$}{
Solve equation (\ref{E:FTa}) for $t^s_{1..n-1}$ with $t^s_0=0$, $t^s_n=T_h$, and $x_0^+=x_{ic}$. \;
\uIf{the solution $t^s_{1..n-1}$ exists}{Calculate, $\mathcal{Y}_a$ from (\ref{E:FS.1}) and $J$ from (\ref{E:JC.2}).}
\Else{Set $u_0^+$ to 0 and $J$ to $\infty$.}
\KwRet{$(u_0^+,J)$} \;
} 
\vspace{0.4cm}
\setcounter{AlgoLine}{0}
\Fn {$\JPMPb(x_{ic},\sigma,q)$}{
Solve equation (\ref{E:FTb}) for $t^s_{1..n-1}$ with $t^s_0=0$ and $x_0^+=x_{ic}$. \; 
\uIf{the solution $t^s_{1..n-1}$ exists}{Calculate, $\mathcal{Y}_b$ from (\ref{E:FS.1m}) and $J_m$ from (\ref{E:JC.1}).}
\Else{Set $u_0^+$ to 0 and $J_m$ to $\infty$.}
\KwRet{$(u_0^+,J_m)$} \;
} 
\end{algorithm2e}

The algorithm gets the current discrete and continuous states $q_{ic}\in Q$,  $x_{ic}\in\mathbb{R}^{n_x}$ and returns the discrete and continuous inputs $\sigma_{ap}\in\Sigma$, $u_{ap}\in\mathbb{R}^{n_u}$ that should be applied to the hybrid plant. 
Each element $s=(\nu,q,\sigma,u_0,J)$ of the set $\mathcal{S}$ contains an assessed pair of discrete sequences $q$, $\sigma$. The value of $J$, is the associated optimal cost in Problem~\ref{PB:2} or Problem~\ref{PB:3} if $\nu=1$ or $\nu=0$ respectively. As shown in the Lemma~\ref{LM:2} in the next section, if $\nu=0$, then $J$ is a lower bound of the cost value for all executions in $\mathcal{E(H}$, $x_{ic}$, $T_h)$ whose discrete sequences have $q$ and $\sigma$ as their prefixes. Therefore, the element $\hat{s}=(\hat{\nu},\hat{\sigma},\hat{q},\hat{u}_0,\hat{J})\in\mathcal{S}$ which has the minimum value of $J$ among the elements of $\mathcal{S}$ solves the Problem~\ref{PB:1} if $\hat{\nu}=1$. Otherwise, if $\hat{\nu}=0$, the algorithm branches $\hat{s}$ until the optimal solution is found.
After completion of the algorithm, $\hat{q}$ and $\hat{\sigma}$ are the optimal sequences of discrete state and discrete input respectively.
More details on the operation of the algorithm together with the proof of its correctness are provided in the next section.

\IncMargin{1em}
\begin{algorithm2e}[t]
\SetAlgoHangIndent{2em}
\SetKwFunction{JPMPa}{JPMPa}
\SetKwFunction{JPMPb}{JPMPb}
\SetKwInOut{Input}{input}
\SetKwInOut{Output}{output}
\Input{$q_{ic}$, $x_{ic}$}
\Output{$\sigma_{ap}$, $u_{ap}$}
\vspace{3pt}
$(\hat{\nu},\hat{\sigma},\hat{q},\hat{u}_0,\hat{J}) \gets (0,\{\},q_{ic},0,0)$ \;
$\mathcal{S} \gets \{(\hat{\nu},\hat{\sigma},\hat{q},\hat{u}_0,\hat{J})\}$ \;
\While{\label{ALG:1.2} $\hat{\nu}=0$}{
$(u^c_0,J_c) \gets \JPMPa(x_{ic},\hat{\sigma},\hat{q})$ \label{ALG:1.8}\;
$\mathcal{S} \gets \mathcal{S} \cup \{(1,\hat{\sigma},\hat{q},u^c_0,J_c)\}$  \label{ALG:1.4}\;
\For{$(q,\sigma,q')\in\Theta$ \textnormal{such that} $q=\hat{q}_{|\hat{q}|}$ \label{ALG:1.5}}{
$(u^c_0,J_c) \gets \JPMPb(x_{ic},\hat{\sigma}\sigma',\hat{q}q')$ \label{ALG:1.9}\;
$\mathcal{S} \gets \mathcal{S} \cup \{(0,\hat{\sigma}\sigma',\hat{q}q',u^c_0,J_c)\}$ \label{ALG:1.1}\;
}
$\mathcal{S} \gets \mathcal{S}\setminus \{(\hat{\nu},\hat{\sigma},\hat{q},\hat{u}_0,\hat{J})\}$ \label{ALG:1.6}\;
$(\hat{\nu}, \hat{\sigma}, \hat{q}, \hat{u}_0, \hat{J}) \gets \underset{\text{such that} J\leq J' \text{ for all} (\nu',\sigma',q',u'_0,J')\in \mathcal{S}}{(\nu, \sigma, q, u_0, J) \in\mathcal{S}\hfill}$ 
\label{ALG:1.3}\;
}
$\sigma_{ap}\gets \hat{\sigma}_1$ \;
$u_{ap}\gets \hat{u}_0$ \;
\vspace{3pt}
\caption{Indirect MPC Algorithm}
\label{ALG:1}
\end{algorithm2e}
\DecMargin{1em}

To extend Algorithm~\ref{ALG:1} for general hybrid systems, the subroutine $\mathtt{JPMPa}$ ($\mathtt{JPMPb}$) should be modified such that it solves the differential-algebraic system of equations given by (\ref{E:2.1}), (\ref{E:2.2}), (\ref{E:GB}) and the conditions in Proposition~\ref{TH:1} (\ref{TH:2}).

\section{Correctness and finiteness of the indirect MPC algorithm}
\label{S:ALP}
First, we define a few additional notations and provide some useful lemmas. For every $\bar{s}=(\bar{\nu}$, $\bar{\sigma}$, $\bar{q}$, $\bar{u}_0$, $\bar{J})\in\mathcal{S}$ in the Algorithm~\ref{ALG:1}, we denote $\bar{\nu}$, $\bar{\sigma}$, and $\bar{q}$ as the $\nu$-component, $\sigma$-component, and $q$-component of $\bar{s}$ respectively. The operations within the while loop at line~\ref{ALG:1.2} of the algorithm constitute an iteration of the algorithm. To indicate the value of a variable at an iteration, the iteration number is added as superscript. 
For example, at line~\ref{ALG:1.2} of the $i$th iteration, the value of the set $\mathcal{S}$ is denoted by $\mathcal{S}^i$ and the value of $(\hat{\nu}$, $\hat{\sigma}$, $\hat{q}$, $\hat{u}_0$, $\hat{J})$ is denoted by $(\hat{\nu}^i$, $\hat{\sigma}^i$, $\hat{q}^i$, $\hat{u}^i_0$, $\hat{J}^i)$. 
Given $\sigma\in\Sigma^{[1..n-1]}$ and $q\in Q^{[1..n]}$ for some $n\in\mathbb{Z}_{\geq 0}$, a function $E_{opt}$ is defined such that the executions obtained in the subroutines $\mathtt{JPMPa}$ and $\mathtt{JPMPb}$ for $x_{ic}\in\mathbb{R}^{n_x}$ are given by $E_{opt}(1,\sigma,q,x_{ic})$ and $E_{opt}(0,\sigma,q,x_{ic})$ respectively. 
For every $\bar{s}=(\bar{\nu}$, $\bar{\sigma}$, $\bar{q}$, $\bar{u}_0$, $\bar{J})\in\mathcal{S}$ at every iteration, if $\bar{\nu}=1$, then $\bar{s}$ is added to $\mathcal{S}$ at line~\ref{ALG:1.4} of the algorithm and $\bar{J}$ is obtained at line~\ref{ALG:1.8} which implies $\bar{J}=E_{opt}(1,\bar{\sigma},\bar{q},x_{ic})$. Otherwise, $\bar{\nu}=0$ and $\bar{s}$ is added to $\mathcal{S}$ at line~\ref{ALG:1.1} of the algorithm. Then, the element $\bar{J}$ is obtained at line~\ref{ALG:1.9} which implies $\bar{J}=E_{opt}(0,\bar{\sigma},\bar{q},x_{ic})$. Therefore, we can write 

\begin{equation}
\label{E:EO}
\bar{J}=E_{opt}(\bar{\nu},\bar{\sigma},\bar{q},x_{ic}) \qquad \forall\,(\bar{\nu},\bar{\sigma},\bar{q},\bar{u}_0,\bar{J})\in\mathcal{S}
\end{equation}

\begin{lemma}
\label{LM:1}
If the Algorithm~\ref{ALG:1} is applied to a hybrid system $\mathcal{H} = (Q$, $\Sigma$, $\Theta$, $f$, $D$, $G$, $R)$ with the cost functional $J$ in (\ref{E:JC}) and $q_{ic}\in Q$, then for every $i\in\mathbb{Z}_{>0}$, $\bartt{q}\in Q^{[1..n]}$, $\bartt{\sigma}\in\Sigma^{[1..n-1]}$ with $n\in\mathbb{Z}_{>0}$ and $\bartt{q}_1=q_{ic}$, there exist an element $(\bar{\nu}$, $\bar{\sigma}$, $\bar{q}$, $\bar{u}_0$, $\bar{J})\in\mathcal{S}^i$ such that either $\vartheta^i_a$ or $\vartheta^i_b$ is true where

\begin{subequations}
\label{E:3} 
\begin{align}
\vartheta^i_a &= \big[ \bar{\nu}=1 \wedge \bar{\sigma}=\bartt{\sigma} \wedge \bar{q}=\bartt{q} \big], \label{E:3.2} \\
\vartheta^i_b &= \big[ \bar{\nu}=0 \wedge \bar{\sigma}\prec\bartt{\sigma} \wedge \bar{q}\prec\bartt{q} \big]. \label{E:3.3} 
\end{align}
\end{subequations}
\end{lemma}

\begin{proof}
The lemma is proved via induction. For $i=1$ in which case $\mathcal{S}^1$ is set by the first two lines of the algorithm, $\vartheta^1_b$ is true. If $\hat{\nu}^i=0$ at the $i$th iteration, then the algorithm continues to the $(i+1)$th iteration. Assuming that $\vartheta^i_a \vee \vartheta^i_b$ is true for some $(\bar{\nu}$, $\bar{\sigma}$, $\bar{q}$, $\bar{u}_0$, $\bar{J})\in\mathcal{S}^i$, we must have $|\bar{q}|\leq|\bartt{q}|$ due to (\ref{E:3}) and there can be four cases:
\begin{itemize}
\item $[\bar{\nu}=1]$. In this case, every element whose $\nu$-component is one, including $(\bar{\nu},\bar{\sigma}$, $\bar{q}$, $\bar{u}_0$, $\bar{J})$ remains in $\mathcal{S}^{i+1}$ and cannot be the element which is removed at line~\ref{ALG:1.6}. Hence, $\vartheta^{i+1}_a$ is true.
\item $[\bar{\nu}=0 \ \wedge \ \neg(\bar{q}=\hat{q}^i \ \wedge \ \bar{\sigma}=\hat{\sigma}^i)]$. In this case, every element whose $q$ and $\sigma$ components are respectively different from $\hat{q}^i$ and $\hat{\sigma}^i$, including $(\bar{\nu},\bar{\sigma}$, $\bar{q}$, $\bar{u}_0$, $\bar{J})$ remains in $\mathcal{S}^{i+1}$ and cannot be the element which is removed at line~\ref{ALG:1.6}. Hence, $\vartheta^{i+1}_b$ is true.
\item $[\bar{\nu}=0 \ \wedge \ (\bar{q}=\hat{q}^i \ \wedge \ \bar{\sigma}=\hat{\sigma}^i) \ \wedge \ |\bar{q}|=|\bartt{q}|]$. In this case, $(1,\bar{\sigma}$, $\bar{q}$, $\bar{u}_0$, $\bar{J})$ is included in $\mathcal{S}^{i+1}$ at line~\ref{ALG:1.4}. Hence, $\vartheta^{i+1}_a$ is true.
\item $[\bar{\nu}=0 \ \wedge \ (\bar{q}=\hat{q}^i \ \wedge \ \bar{\sigma}=\hat{\sigma}^i) \ \wedge \ |\bar{q}|<|\bartt{q}|]$. In this case, 
$(0,\bar{\sigma}\bartt{\sigma}_{|\bar{q}|}$, $\bar{q}\bartt{q}_{|\bar{q}|+1}$, $u^c_0$, $J_c)$ for some $u^c_0$ and $J_c$ is included in $\mathcal{S}^{i+1}$ at line~\ref{ALG:1.1} and $\vartheta^{i+1}_b$ is true.
\end{itemize}
Therefore, $\vartheta^i_a \vee \vartheta^i_b$ remains true at the $(i+1)$th iteration in all of the cases. 
\end{proof}

\begin{lemma}
\label{LM:2}
Considering a hybrid system $\mathcal{H}$, initial continuous state $x_{ic}\in\mathbb{R}^{n_x}$, and a cost functional $J$ as in (\ref{E:JC}), for every $\bartt{E} = (\bartt{{t^s}}$, $\bartt{q}$, $\bartt{\sigma}$, $\bartt{x}$, $\bartt{u})\in\mathcal{E(H}$, $x_{ic})$, $\bar{q}\in\Sigma^{[1..n]}$, and $\bar{\sigma}\in Q^{[1..n-1]}$ with $n\in\mathbb{Z}_{>0}$, if we have $\bar{q} \prec \bartt{q}$ and $\bar{\sigma} \prec \bartt{\sigma}$, then $J(\bartt{E})\geq J_m(E_{opt}(0,\bar{\sigma},\bar{q},x_{ic}))$. 
\end{lemma}

\begin{proof}
If the conditions $\bar{q} \prec \bartt{q}$ and $\bar{\sigma} \prec \bartt{\sigma}$ hold, then $\bartt{E}$ can be trimmed into an execution $E^m = (t^{sm}$, $q^m$, $\sigma^m$, $x^m$, $u^m)\in\mathcal{E(H},x_{ic})$ which is a prefix of $\bartt{E}$ and satisfies $q^m = \bar{q}$, $\sigma^m = \bar{\sigma}$, and $t^{sm}_n=t^{sm}_{n-1}$. Since $E_m$ is a prefix of $\bartt{E}$, one can write $J_m(E_m)\leq J(\bartt{E})$ according to (\ref{E:JC}). On the other hand, we have $J_m(E_{opt}(0,\bar{\sigma},\bar{q},x_{ic})) \leq J_m(E_m)$, since by the definition, $E_{opt}(0,\bar{\sigma},\bar{q},x_{ic})$ solves the Problem~\ref{PB:3}. The combination of these two inequalities proves the lemma.
\end{proof}

\begin{lemma}
\label{LM:3}
If the Algorithm~\ref{ALG:1} is applied to a hybrid system $\mathcal{H}$ with the cost functional $J$ in (\ref{E:JC}), $x_{ic}\in\mathbb{R}^{n_x}$, and $q_{ic}\in Q$, then for every $i\in\mathbb{Z}_{>0}$ and for every execution $\bartt{E} = (\bartt{{t^s}}, \bartt{q}, \bartt{\sigma}, \bartt{x}, \bartt{u})\in\mathcal{E(H}$, $x_{ic}$, $T_h)$ which satisfies $\bartt{q}_1=q_{ic}$, we have that $J(\bartt{E})\geq \hat{J}^i$.
\end{lemma}

\begin{proof}
The Lemma~\ref{LM:1} implies that for every $i\in\mathbb{Z}_{>0}$ there exist $(\bar{\nu}$, $\bar{\sigma}$, $\bar{q}$, $\bar{u}$, $\bar{J})\in\mathcal{S}^i$ such that either $\vartheta^i_a$ or $\vartheta^i_b$ in (\ref{E:3}) is true for the elements $\bartt{q}$ and $\bartt{\sigma}$ of the execution $\bartt{E}$. First, it is shown that we have $J(\bartt{E})\geq \bar{J}$.
If $\vartheta^i_a$ is true, then $\bar{\nu}=1$, $\bartt{q}=\bar{q}$, and $\bartt{\sigma}=\bar{\sigma}$. Then, $E_{opt}(\bar{\nu}$, $\bar{\sigma}$, $\bar{q}$, $x_{ic})$ solves the Problem~\ref{PB:2} and we have $J(\bartt{E})\geq J(E_{opt}(\bar{\nu}$, $\bar{\sigma}$, $\bar{q}$, $x_{ic}))=\bar{J}$ according to (\ref{E:EO}). 
On the other hand, if $\vartheta^i_b$ is true, then $\bar{\nu}=0$, $\bartt{q}\prec\bar{q}$, and $\bartt{\sigma}\prec\bar{\sigma}$. According to the Lemma~\ref{LM:2}, we have $J(\bartt{E})\geq J_m(E_{opt}(\bar{\nu}$, $\bar{\sigma}$, $\bar{q}$, $x_{ic}))=\bar{J}$. 
Hence, $J(\bartt{E})\geq \bar{J}$ holds in every case.
The operation at line~\ref{ALG:1.3} of the algorithm requires that $\bar{J}\geq\hat{J}^i$ which together with $J(\bartt{E})\geq \bar{J}$ gives $J(\bartt{E})\geq \hat{J}^i$.
\end{proof}

\subsection{Correctness of the algorithm}

The following result establishes the correctness of the Algorithm~\ref{ALG:1}.

\begin{theorem}
\label{TH:3}
If the Algorithm~\ref{ALG:1} is applied to a hybrid system $\mathcal{H}$ with the cost functional $J$ in (\ref{E:JC}), the initial continuous state $x_{ic}\in\mathbb{R}^{n_x}$, and the initial discrete state $q_{ic}\in Q$, then after the termination of the algorithm, the execution $\hat{E}=E_{opt}(1$, $\hat{\sigma}$, $\hat{q}$, $x_{ic})$ solves the Problem~\ref{PB:1}.
\end{theorem}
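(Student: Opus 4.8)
The plan is to read Algorithm~\ref{ALG:1} as a branch-and-bound search over the tree of admissible discrete sequences and to establish correctness through the two standard ingredients: a \emph{feasibility} fact, that each flag-$1$ entry of $\mathcal{S}$ is a genuine candidate for Problem~\ref{PB:1}, and a \emph{bounding} fact, that each flag-$0$ entry underestimates the cost of every completion of its discrete prefix. The root of the tree is the length-one sequence $(q_{ic})$, and the children of a node with terminal discrete state $q$ are the sequences obtained by appending a jump $(\sigma,q')$ with $G(q,\sigma,q')\neq\emptyset$, which is exactly the branching set $\mathcal{T}$ formed in line~\ref{ALG:1.7}. Since every admissible jump has a nonempty guard and an execution over the finite horizon $[0,T_h]$ has finitely many jumps, the discrete sequence of any admissible execution is a node of this tree. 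Throughout I would treat $\mathtt{JPMPa}$ and $\mathtt{JPMPb}$ as returning the optimal values of Problem~\ref{PB:2} and Problem~\ref{PB:3} for the given fixed sequences, as furnished by Propositions~\ref{TH:1} and~\ref{TH:2}.

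The crux is the bounding lemma: if a flag-$0$ entry carries the discrete sequence $s=(q_1,\dots,q_n)$ and Problem~\ref{PB:3}-value $J_m^\ast(s)$, then $J(E)\geq J_m^\ast(s)$ for every execution $E\in\mathcal{E}(\mathcal{H},x_{ic},T_h)$ whose discrete sequence has $s$ as a prefix. To prove it I would truncate such an $E$ at the instant $t^s_{n-1}$ of its $(n-1)$-th jump, keeping the same continuous trajectory and input on $[0,t^s_{n-1}]$ and setting $t^s_n=t^s_{n-1}$ so that the $n$-th flow has zero duration. This truncation is feasible for Problem~\ref{PB:3} with sequence $s$, whence $J_m^\ast(s)$ is at most its $J_m$; and its $J_m$ equals the running cost over the first $n-1$ flows of $E$ plus the first $n-1$ jump costs of $E$, which, since $l$, $h$ and $h_f$ are nonnegative, is at most $J(E)$ after dropping the tail integrals, the later jump costs, and the terminal term $h_f$ in~(\ref{E:JC}). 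Chaining the two inequalities yields $J_m^\ast(s)\leq J(E)$.

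Next I would establish a covering invariant by induction on the \textbf{while} iterations: at the selection step~\ref{ALG:1.3}, every admissible discrete sequence is \emph{covered} by $\mathcal{S}$, meaning it either equals the sequence of some flag-$1$ entry or has the sequence of some flag-$0$ entry as a prefix. The base case holds because the initial set contains only the flag-$0$ root $(q_{ic})$, which is a prefix of every admissible sequence. For the inductive step, one iteration replaces the selected flag-$0$ entry with prefix $\hat{s}$ by a flag-$1$ entry with the same sequence $\hat{s}$ (line~\ref{ALG:1.4}) together with flag-$0$ entries with sequences $\hat{s}(\sigma,q)$ for all $(\sigma,q)\in\mathcal{T}$ (line~\ref{ALG:1.1}); any admissible sequence that extended $\hat{s}$ is either $\hat{s}$ itself, now covered by the flag-$1$ entry, or continues past $\hat{s}$ through an admissible jump and therefore has some $\hat{s}(\sigma,q)$ as a prefix, while coverage of all other sequences is untouched.

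Finally I would close the argument at termination, where $\hat{\nu}=1$ and, by the selection rule in line~\ref{ALG:1.3}, $\hat{J}\leq J'$ for every entry of $\mathcal{S}$. The flag-$1$ entry $\hat{E}$ is by construction an admissible execution of Problem~\ref{PB:1} with $q_1=q_{ic}$ and cost $\hat{J}$, so the optimal value is at most $\hat{J}$. Conversely, for any admissible execution $E^\ast$ of Problem~\ref{PB:1}, the covering invariant places its discrete sequence under some entry of $\mathcal{S}$: if that entry is flag-$1$ with the same sequence, its cost is the Problem~\ref{PB:2}-optimum for that sequence and hence at most $J(E^\ast)$; if it is flag-$0$, the bounding lemma gives that its value is at most $J(E^\ast)$. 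In either case $\hat{J}\leq J(E^\ast)$, so $\hat{E}$ attains the optimum and solves Problem~\ref{PB:1}. I expect the main obstacle to be the bounding lemma, specifically verifying that the zero-duration truncation of an arbitrary admissible execution is itself an admissible Problem~\ref{PB:3}-feasible execution and that the cost in~(\ref{E:JC}) splits cleanly along the prefix; the covering and selection steps are then routine branch-and-bound bookkeeping.
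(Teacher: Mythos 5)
Your proposal is correct, and since the paper omits its proof of Theorem~\ref{TH:3} there is nothing to compare it against line by line; your argument is precisely the standard branch-and-bound correctness proof that the paper's construction intends (the paper itself describes Algorithm~\ref{ALG:1} as ``based on the branch and bound method''), with the feasibility of the zero-duration truncation for the bounding lemma, the covering invariant, and the final selection argument all checking out against Definitions~\ref{D:HS}--\ref{D:EX} and the non-negativity of $l$, $h$, $h_f$. The one caveat is your explicit assumption that $\mathtt{JPMPa}$ and $\mathtt{JPMPb}$ return the \emph{global} optima of Problems~\ref{PB:2} and~\ref{PB:3}: Propositions~\ref{TH:1} and~\ref{TH:2} furnish only necessary conditions, so this (together with handling of infeasible subproblems, where the necessary-condition equations have no solution) is an implicit assumption you share with the paper rather than a defect of your argument relative to it.
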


\begin{proof} 
Denoting the total number of iterations by $\ell$, the condition at line~\ref{ALG:1.2} requires that $\hat{\nu}^\ell = 1$ at the final iteration. 
Hence, we can write $\hat{E}=E_{opt}(\hat{\nu}^\ell$, $\hat{\sigma}^\ell$, $\hat{q}^\ell$, $x_{ic})$. Also, we have $\hat{J}^\ell=J(E_{opt}(\hat{\nu}^\ell$, $\hat{\sigma}^\ell$, $\hat{q}^\ell$, $x_{ic}))$ due to (\ref{E:EO}) which then gives $\hat{J}^\ell=J(\hat{E})$.
Application of the Lemma~\ref{LM:3} with $i=\ell$ results in $J(\bartt{E})\geq$ $\hat{J}^\ell=$ $J(\hat{E})$ for every $\bartt{E} = (\bartt{{t^s}}, \bartt{q}, \bartt{\sigma}, \bartt{x}, \bartt{u})\in\mathcal{E(H}$, $x_{ic}$, $T_h)$ with $\bartt{q}_1=q_{ic}$. Therefore, the execution $\hat{E}$ solves the Problem~\ref{PB:1}. 
\end{proof}

\subsection{Finiteness of the algorithm}

In the general case, there is no upper bound on the length of the optimal state sequence for the execution which solves the Problem~\ref{PB:1}. However, for applying the indirect hybrid MPC algorithm in practice, it is important to ensure that the Algorithm~\ref{ALG:1} terminates in a finite number of steps. In this part, two solutions are proposed for managing the number of iterations of the algorithm. 

Considering a hybrid system $\mathcal{H} = (Q$, $\Sigma$, $\Theta$, $f$, $D$, $G$, $R)$, we enumerate the elements of $Q$ as $Q=\{q^e_1$, $q^e_2$, $\cdots$, $q^e_{|Q|}\}$. Then, the set of jumps $\Theta$ is converted to a matrix $\Theta_a$ defined as 
\begin{subequations}
\begin{align*}
[\Theta_a]_{i,j} &= |\varsigma(q^e_i,q^e_j)| & i,j\in[1..|Q|],\\
\varsigma(q,q') &= \{\sigma\in\Sigma \, | \, (q,\sigma,q')\in\Theta\} & \forall\,q,q'\in Q.
\end{align*}
\end{subequations}

In fact, $\Theta_a$ is the adjacency matrix of a directed multigraph $\mathcal{G(H)}$ with the set of vertices $Q$ and the set of edges given by $\Theta$. There is an edge from $q^e_i$ to $q^e_j$ for every $(q^e_i,\sigma,q^e_j)\in\Theta$. It is known that the number of the directed paths of length $k\in\mathbb{Z}_{>0}$ from $q^e_i$ to $q^e_j$ in $\mathcal{G(H)}$ is given by $[\Theta_a^k]_{i,j}$ \cite{R:GT1}.
For every execution $(t^s,q,\sigma,x,u)\in\mathcal{E(H)}$ with $|q|=n\in\mathbb{Z}_{>0}$, there exist a directed path $\theta\in\Theta^{[1..n-1]}$ in $\mathcal{G(H)}$ such that $\theta_i=(q_i,\sigma_i,q_{i+1})$ for $i\in[1..n-1]$. Therefore, the number of all possibilities for the discrete sequences $q$ and $\sigma$ of an execution with $|q|\leq m\in\mathbb{Z}_{>0}$ and $q_1=q_{ic}$ can be computed as 
\begin{align}
\label{E:NS}
\mathcal{N}(m,q_{ic}) &= \sum_{i=0}^{m-1} \sum_{j=1}^{|Q|} [\Theta_a^{i}]_{q_{ic},j} 
= [(\textstyle{\sum_{i=0}^{m-1}} \Theta_a^{i})\,\mathbf{1}]_{q_{ic}} . 
\end{align}

\noindent
in which all of the elements of $\mathbf{1}\in\mathbb{R}^{|Q|}$ are equal to one.

By defining $n_a$ as (\ref{E:NSB2}) in the following, one can write the element-wise inequality $\Theta_a \mathbf{1}\leq n_a \mathbf{1}$. This inequality can be applied repeatedly to obtain $\mathcal{N}(m,q_{ic})\leq\sum_{i=0}^{m-1}n_a^i$ which can be represented as in (\ref{E:NSB1}).

\begin{subequations}
\begin{align}
\mathcal{N}(m,q_{ic}) &\leq \begin{cases}\frac{n_a^{m}-1}{n_a-1} & n_a>1 \\ m & n_a=1\end{cases} \label{E:NSB1} \\
n_a&=\max_{i\in[1..|Q|]}[\Theta_a \mathbf{1}]_i \label{E:NSB2}
\end{align}
\end{subequations}

The solutions for assuring the finiteness of the Algorithm~\ref{ALG:1} are based on the following lemma.

\begin{lemma}
\label{LM:4}
Considering a hybrid system $\mathcal{H}$ and initial discrete state $q_{ic}$, the number of iterations of the Algorithm~\ref{ALG:1} before reaching the condition $|\hat{q}|>m$ for an integer $m\in\mathbb{Z}_{>0}$ is less than $\mathcal{N}(m,q_{ic})$ given in (\ref{E:NS}).
\end{lemma}

\begin{proof} 
If the condition $|\hat{q}|>m$ is not reached until the $i$th iteration, then we must have $|\bar{q}|\leq m$ for every $(\bar{\nu}$, $\bar{\sigma}$, $\bar{q}$, $\bar{u}$, $\bar{J})\in\mathcal{S}^i$.
Because, if $|\bar{q}|>m$, then it is necessary that $|\hat{q}^j|=m$ for some $j<i$ such that the algorithm can generate elements in $\mathcal{S}^{j+1}$ with $q$-component longer than $m$ at line~\ref{ALG:1.1} in the $j$th iterations. However, this contradicts with the assumption that the condition $|\hat{q}|>m$ is not reached until the $i$th iteration. 

The number of all possible combinations of $q\in Q^{[1..n]}$ and $\sigma\in\Sigma^{[1..n-1]}$ with $q_1=q_{ic}$ and $n\leq m$ is calculated as $\mathcal{N}(m,q_{ic})$. 
Each of these possibilities may appear as a pair of $\hat{q}^i$ and $\hat{\sigma}^i$ with $\hat{\nu}^i=0$ for some $i$ to initiate an iteration. Therefore, in the worst case, the maximum number of iterations would be $\mathcal{N}(m,q_{ic})$.
\end{proof} 

The first solution for keeping the number of iterations finite, is to terminate the algorithm whenever $|\hat{q}|$ exceeds a prescribed value according to the following corollary which is a direct consequence of the Lemma~\ref{LM:4}.

\begin{corollary}
\label{TH:4}
Considering a hybrid system $\mathcal{H}$, if the condition $\hat{\nu}=0$ at line~\ref{ALG:1.2} of the algorithm is modified to $\hat{\nu}=0 \wedge |\hat{q}|\leq m$, then the number of iterations of the Algorithm~\ref{ALG:1} before completion will be less than $\mathcal{N}(m,q_{ic})$ given in (\ref{E:NS}).
\end{corollary}

However, the limitation on the length of the $\hat{q}$ in the above corollary may result in a suboptimal control at each step of the MPC algorithm (if the optimal length of the discrete state sequence becomes greater than $m$). 
The second solution for ensuring the finiteness of algorithm is to select the function $h$ in (\ref{E:JC.1}) such that it is lower bounded according to the following proposition. 

\begin{theorem}
\label{TH:5}
Considering a hybrid system $\mathcal{H}$ and the execution $E$ which solves the Problem~\ref{PB:1} for $\mathcal{H}$ with the cost functional $J$ in (\ref{E:JC}), if the function $h$ is lower bounded by $h_{min}$, then the number of iterations of the Algorithm~\ref{ALG:1} will be less than $\mathcal{N}(m,q_{ic})$ given in (\ref{E:NS}) with $m= 1+\lfloor J(E)/h_{min}\rfloor$.
\end{theorem}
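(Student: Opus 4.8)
The plan is to read Algorithm~\ref{ALG:1} as a best-first branch-and-bound search over the tree of admissible discrete sequences rooted at $q_{ic}$, and to bound the number of expanded nodes using the two ingredients a branch-and-bound argument always rests on: a valid lower bound stored at every partial node, and the invariant that the frontier always retains a prefix of the optimal sequence. Let $E$ solve Problem~\ref{PB:1}, put $J^\ast=J(E)$, and let $q^\ast=(q_1^\ast,\dots,q_N^\ast)$ and $\sigma^\ast$ be its discrete sequences, recalling that $q_1^\ast=q_{ic}$. Each element of $\mathcal S$ is a tuple $(\nu,\sigma,q,u_0,J)$; a node with $\nu=0$ stores in $J$ the value $J_m$ returned by \texttt{JPMPb} for the prefix $q$ (line~\ref{ALG:1.9}), while a node with $\nu=1$ stores the full value returned by \texttt{JPMPa} (line~\ref{ALG:1.8}). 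Because the loop runs while $\hat\nu=0$, every iteration expands exactly one $\nu=0$ node, namely the current selected node, and then deletes it on line~\ref{ALG:1.6}; a given sequence labels at most one $\nu=0$ node (it is generated only when its unique parent is expanded), so the iteration count equals the number of distinct $\nu=0$ nodes ever selected.

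First I would prove the lower-bound lemma. If a $\nu=0$ node carries a sequence $q$ of length $k$, hence $k-1$ jumps, its stored value is a minimum of $J_m=\sum_{i=1}^{k}\int l\,dt+\sum_{i=1}^{k-1}h$; since $l\ge 0$ and $h\ge h_{min}$ by~(\ref{E:JC}) and the hypothesis, this value is at least $(k-1)h_{min}$. Second, the value stored at a prefix of $q^\ast$ never exceeds $J^\ast$: truncating $E$ right after its $(k-1)$-th jump and appending a final flow of zero length gives an execution feasible for Problem~\ref{PB:3} on the sequence $q^\ast_{1..k}$ (it inherits all guard and reset relations of $E$ and satisfies~(\ref{E:TP3})), and its cost is $\sum_{i=1}^{k-1}\int l\,dt+\sum_{i=1}^{k-1}h\le J(E)=J^\ast$ because the discarded terms (the tail integrals, the remaining jump costs, and $h_f$) are all non-negative. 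Hence the \texttt{JPMPb} minimum for $q^\ast_{1..k}$ is at most $J^\ast$.

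Next I would establish the frontier invariant: whenever line~\ref{ALG:1.3} selects the next node, $\mathcal S$ contains a node whose sequence is a prefix of $q^\ast$ with value at most $J^\ast$. It holds at initialization (the root has sequence $(q_1^\ast)$ and value $0$) and is maintained by a case split on the node just expanded: if it is not the prefix node, the latter survives the deletion on line~\ref{ALG:1.6}; if it is, and $q^\ast$ is not yet exhausted, then $E$ jumps from $q^\ast_k$ to $q^\ast_{k+1}$, so $G(q^\ast_k,\sigma^\ast_k,q^\ast_{k+1})\ne\emptyset$, whence $(\sigma^\ast_k,q^\ast_{k+1})\in\mathcal T$ on line~\ref{ALG:1.7} and the loop body inserts the length-$(k{+}1)$ prefix node, whose value is at most $J^\ast$ by the previous step. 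Since line~\ref{ALG:1.3} always takes a node of minimum value, the selected node, and therefore every expanded $\nu=0$ node, has value at most $J^\ast$. Combined with the lower-bound lemma, an expanded node of depth $k$ obeys $(k-1)h_{min}\le J^\ast$, i.e. $k\le 1+J^\ast/h_{min}=m$.

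It remains to count. Every expanded node is a $\nu=0$ node of depth at most $m$, is expanded once, and corresponds to an admissible sequence of length $\le m$ rooted at $q_{ic}$ with each successive transition satisfying $G\ne\emptyset$; the number of such sequences is $\mathcal N(m,q_{ic})$ from~(\ref{E:NSB1}). Since the terminating selection is a $\nu=1$ node, which is never expanded, the count is strictly smaller, giving the stated bound. I expect the genuine obstacle to be the lower-bound lemma rather than the enumeration: one must verify carefully that the zero-length-flow truncation of $E$ satisfies \emph{all} constraints of Problem~\ref{PB:3}, including the guard equality~(\ref{E:GB}) at each retained jump, and the whole depth estimate is meaningful only when $h_{min}>0$, so the statement must be read with that positivity understood, otherwise $m$ is infinite and the bound is vacuous. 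A minor point to confirm is that \texttt{JPMPa} applied to the full $q^\ast$ returns exactly $J^\ast$, which is what lets the finally selected $\nu=1$ node coincide with the optimum certified by Theorem~\ref{TH:3}.
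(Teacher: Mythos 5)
The paper itself omits the proof of Theorem~\ref{TH:5} (``The proof is omitted.''), so there is no reference argument to compare yours against. Judged on its own, your proposal is the natural branch-and-bound argument that the paper's description of Algorithm~\ref{ALG:1} clearly intends, and its main chain is sound: (i) each $\nu=0$ node of depth $k$ stores a value bounded below by $(k-1)h_{min}$, because the \texttt{JPMPb} cost consists of $k-1$ jump costs each at least $h_{min}$ plus non-negative running costs; (ii) truncating the optimal execution $E$ after its $(k-1)$-th jump and appending a zero-duration final flow is feasible for Problem~\ref{PB:3} --- Definition~\ref{D:EX} permits $t^s_n = t^s_{n-1}$, which is exactly constraint~(\ref{E:TP3}), and all retained guard and reset relations are inherited from $E$ --- so every prefix node carries value at most $J(E)$; (iii) the best-first selection on line~\ref{ALG:1.3} together with your frontier invariant then forces every expanded node to have value at most $J(E)$, hence depth at most $m$; and (iv) counting sequences in the transition tree yields $\mathcal{N}(m,q_{ic})$. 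You are also right to flag that the statement is vacuous unless $h_{min}>0$, and that \texttt{JPMPa} on the full optimal sequence returns $J(E)$, which ties termination to the optimum of Theorem~\ref{TH:3}. One gloss worth noting: the branching factor $n_a=|\Sigma|(|Q|-1)$ in~(\ref{E:NSB1}) requires that a jump changes the discrete state (so $q'\neq q$ in each admissible transition); this follows from the paper's convention that a jump is ``a change of the discrete state,'' but your counting step silently uses it.

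The one genuine weak point is your justification of the \emph{strict} inequality. You argue that the count is strictly smaller ``since the terminating selection is a $\nu=1$ node, which is never expanded,'' but $\mathcal{N}(m,q_{ic})$ counts discrete \emph{sequences}, and the terminating $\nu=1$ node shares its sequence with a $\nu=0$ node that was already expanded (indeed, its expansion is what created the $\nu=1$ node on line~\ref{ALG:1.4}), so no sequence is saved from the count by this observation. The strictness actually comes from elsewhere: expanded nodes have integer depth $k\leq m$, so when $J(E)/h_{min}$ is not an integer the count is at most $\mathcal{N}(\lfloor m\rfloor,q_{ic})<\mathcal{N}(m,q_{ic})$. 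When $m$ is an integer, strictness can in principle fail under adversarial tie-breaking on line~\ref{ALG:1.3}: if every node of depth up to $m$ has value exactly $J(E)$ (zero running cost, every jump costing exactly $h_{min}$, all guards nonempty), all $\mathcal{N}(m,q_{ic})$ of them could be expanded before the $\nu=1$ node of equal value is selected. So either the theorem's inequality should be read as non-strict in that degenerate case, or a tie-breaking rule favoring $\nu=1$ nodes must be assumed; your proof as written does not close this edge case, though everything else stands.
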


\begin{proof} 
Applying the Lemma~\ref{LM:3} with $\bartt{E}=E$ we have $J(E)\geq\hat{J}^i$ for every $i$. Since $h$ is lower bounded by $h_{min}$, we have $\hat{J}^i\geq h_{min}(|\hat{q}^i|-1)$ according to the definition of the cost functional in (\ref{E:JC}). This in combination with $J(E)\geq\hat{J}^i$ results in $|\hat{q}^i|\leq 1+\lfloor J(E)/h_{min}\rfloor$. Then, application of the Lemma~\ref{LM:4} completes the proof.
\end{proof} 

Since $J(E)$ is not known a priori, one can replace it with $J(E_a)$ for an arbitrary execution $E_a\in\mathcal{E(H)}$ to obtain a larger upper bound for the number of iterations. Because, we always have $J(E_a)\geq J(E)$ and $\mathcal{N}$ in (\ref{E:NS}) is non-decreasing with respect to $m$. For example, one can select some discrete sequences $q$, $\sigma$ (the most simple choice is $q=q_{ic}$, $\sigma=\{\}$) and compute $J(E_a)$ as $J(E_a)=\mathtt{\JPMPa}(x_{ic},\sigma,q)$.

\section{Remarks on Extending the Results} 
\label{S:RX}
In this section, brief comments are provided for some important aspects that cannot be fully addressed in this paper due to the limited space.

\subsection{Piecewise affine hybrid systems}
\label{SS:GPW}
In some practical hybrid systems, the function elements $f$, $g$, and $R$ may be affine either naturally or approximately. For example, guard conditions are in the form of threshold values for state or output variables in most of the cases. 
Otherwise, it should be possible to approximate these function elements with piecewise affine (PWA) functions. PWA functions are treated vary naturally in the framework of hybrid systems \cite{R:HYS}. 
For example, the surface $g_{q,\sigma,q'}(x)=0$ in (\ref{E:GB}) can be approximated as $x\in\Omega_j \Rightarrow M_{q,\sigma,q'}^{x,j} x + M_{q,\sigma,q'}^{c,j} = 0$ such that  $\mathbb{R}^{n_x}$ is partitioned by $\Omega_j$ for $1\le j\le m$. 
Then, new discrete inputs $\sigma_i$ for $1\le j\le m$ are defined such that $(q,\sigma,q')\in\Theta$ is decomposed to multiple transitions $(q,\sigma_j,q')$ for $1\le j\le m$ with $g_{q,\sigma_j,q'}(x) = M_{q,\sigma,q'}^{x,j} x+M_{q,\sigma,q'}^{c,j}$. To consider the partitions, a small modification should be made in the functions $\mathtt{\JPMPa}$ and $\mathtt{\JPMPb}$ in the Algorithm 1 on page 5 such that the obtained solution is acceptable if the jump states ($x_i^-$ for $i\in{1..n-1}$) belong to the corresponding partitions. 

\subsection{Constraints}
A useful feature of the MPC method is the possibility of imposing constraints on the system variables within the prediction horizon. It is possible to retain this feature in the indirect MPC using the versions of the HMP for constrained optimal control \cite{R:HO5}. 
An intermediate situation is to have the element-wise inequality constraints (\ref{E:IC.1}) in the following for every $(q,\sigma,q')\in\Theta$
that are imposed at $t^s_i, i\in[1..n]$. 
According to some versions of the HMP (e.g. \cite{R:HO4, R:HO3}), if the Problem~\ref{PB:2} additionally requires (\ref{E:IC.1}), then the Proposition~\ref{TH:1} is modified such that a term $\beta_i^T D_x^T \phi_{q_i,\sigma_i,q_{i+1}}(x_i^-)$ with $\beta_i\in\mathbb{R}^{n_{q_i,\sigma_i,q_{i+1}}}$ satisfying 
(\ref{E:IC.3}) is added to the right hand side of (\ref{E:PMPj.1}), the same term for $i=n$ is added to the right hand side of (\ref{E:PMPf.1}), and the Equation (\ref{E:IC.2}) holds for $j\in[1..n_{q_i,\sigma_i,q_{i+1}}]$.
\begin{align}
\phi_{q_i,\sigma_i,q_{i+1}}(x_i^-)&\leq 0 \qquad i\in[1..n] \label{E:IC.1}\\
\beta_i&\geq 0 \qquad i\in[1..n] \label{E:IC.3}\\
[\beta_i]_j [\phi_{q_i,\sigma_i,q_{i+1}}(x_i^-)]_j &= 0 \qquad i\in[1..n] \label{E:IC.2}
\end{align}

Input constraints on $u_i^-, i\in[1..n]$, can be handled by replacing \eqref{E:US} with the corresponding Karush-Kuhn-Tucker (KKT) conditions which increases the complexity of calculations. A better idea is to convert the continuous inputs to continuous states by appending integrators at the inputs to treat the input constraints as state constraints in the form of \eqref{E:IC.1}.

Consider a hybrid system with the set of discrete states $Q$, and an execution of it with time sequence $t^s$. To impose constraints at an arbitrary time during the flows within the prediction horizon $t'\in(t^s_0,t^s_n)$ when the discrete state is $q\in Q$, one can virtually add an ineffective jump at $t'$ from $q$ to itself. Then, an additional constraint in the form of (\ref{E:IC.1}) can be imposed at $t'$. For this purpose, an auxiliary state variable $\rho$ with $\dot{\rho}=1$ during flows and $\rho_i^+=\rho_i^-$ at jumps is added to the continuous state $x$ in order to measure time. Then, the condition for the ineffective jump can be represented as $g_{q,\sigma,q}(x)=\rho-t'=0$ for every $q\in Q$, $\sigma\in\Sigma$. Of course, $\rho$ can be used to apply constraints at an arbitrary number of time instants within the prediction horizon. 
It is mentioned that the HMP can be extended to the case in which (\ref{E:GB}) is time dependent such that there is no need to define $\rho$.

\subsection{Stable MPC}
A basic requirement for every control system is stability. There are two means of achieving stability in MPC algorithms \cite{R:HMPC1}: constraint or cost on the final state $x(t+T_h)$. 
There exist results on stability and recursive feasibility of the MPC for discrete-time hybrid systems \cite{R:Lazar2006, R:NMPC2, R:HMPC1}. It is not difficult to modify these results for the MPC formulation in Subsection~\ref{SS:PB}.

\section{Case Study}
\label{S:EXM}
This section presents an application of the proposed method to the supermarket refrigeration system in \cite{R:SU1, R:SU2}, during which comparisons are made with the MPC approach in \cite{R:HMPC4,R:MLD,R:HMPC1} denoted as MLD-MPC. This system is composed of $n_d>1$ display cases and some compressors for circulation of the refrigerant fluid. The set of equations that determine the hybrid dynamics of the system are provided in \cite{R:SU2}. The control inputs are the evaporator inlet valves $valve_i\in\{0,1\}$, $i\in[1..n_d]$ that are discrete and the compressing capacity $comp$ which is continuous. The objective is to control the air temperature in the display cases $T_{a,i}$, $i\in[1..n_d]$ and the suction pressure $P_{suc}$ with minimal control effort. 
A traditional control system is described in \cite{R:SU2} which is composed of $n_d$ hysteresis controllers for adjusting the air temperature in each of the display cases and a PI controller with deadband for regulating the suction pressure. A shortcomings of this controller is the tendency to synchronize the switching times of the inlet valves which causes fluctuations, reduces efficiency, and damages the compressor. Two different MPC solutions are applied to the system in \cite{R:SU1, R:SU3, R:SU4}. The first solution in \cite{R:SU1} applies the hybrid MPC method of \cite{R:MLD} which faces issues when the controller execution period is small. To avoid these issues, the PI controller for the suction pressure is unaltered in \cite{R:SU3, R:SU4} and a nonlinear MPC algorithm is applied to determine only the switching times of the valves. 

\subsection{Applying the Results}
To use the indirect MPC algorithm for controlling the whole refrigeration system, it is first required to define a cost function. 
In order to assign cost to variations of $comp$ (similar to \cite{R:SU1}), a new input is defined as $u = \frac{d}{dt}comp$. Also, another state variable $\rho$ is defined for assigning cost to short switching time intervals such that $\dot{\rho} = -w_1 \rho$ during flows and $\rho^+_i = w_2 \rho^-_i + w_3$ at a switching time instant $t^s_i$ ($w_1,w_2$, and $w_3$ are design parameters). 
There are nonlinearities in equations of the system in \cite{R:SU2}. To approximate them by affine equations, the right hand sides of the equations in the appendix A of \cite{R:SU2} are approximated by constant values evaluated for $P_{suc}=P_{des}$ and the right hand side of equation (6) in \cite{R:SU2} is approximated by a linear function. 
A system with two display cases is considered ($n_d = 2$). The controller is free to make jumps at every time instant by switching $valve_i$, $i\in\{1,2\}$. Hence, the matrix coefficients in (\ref{E:AHS.2}) are set to zero according to the Remark~\ref{RM:1}. The matrices in (\ref{E:AHS.3}) are also obtained from the behavior of $\rho$ described above in this subsection and the fact that other state variables do not change at jumps.
The cost functional is selected as 
\begin{subequations}
\label{E:JCS}
\begin{align}
J &= \int_t^{t+T_h} l(\tau) d\tau + \sum_{i=1}^n [w_7+ (\rho_i^-)^2] \label{E:JCS1}\\
l &= w_4(P_{suc}-1.4)^2+w_5\sum_{i=1}^2(T_{a,i}-3)^2+w_6 u^2
\end{align}
\end{subequations}
from which the coefficients in (\ref{E:QJ}) can be determined.

For the simulations, the set of parameter values $T_h=200$ sec, $w_1=-0.1$, $w_2=0.1$, $w_3=200$, $w_4=2$, $w_5=0.2$, $w_6=10^{-3}$, $w_7=100$ is considered. The parameters heat flow to each display case $\dot{Q}_{airload}$ and miscellaneous refrigerant flow $\dot{m}_{ref,const}$ in \cite{R:SU2} are set to $3000$ J/sec and $0.2$ Kg/sec respectively. 
The simulation results for the indirect MPC method are shown in plots (a) through (d) of Fig.~\ref{FG:2}. The original nonlinear dynamical equations of the refrigeration system in \cite{R:SU2} are used for the simulation of the plant. The simulation time step is set to $T_s=1$ sec. The controller execution period is set to $T_c=5$ sec. 
The MLD-MPC approach is also applied to the refrigeration system using the Hybrid toolbox for MATLAB \cite{R:Bemporad2009}. 
The simulation results for the MLD-MPC method are shown in plots (e) through (h) of Fig.~\ref{FG:2}. 

\begin{figure}[!h]
\centering
\includegraphics[width=12cm]{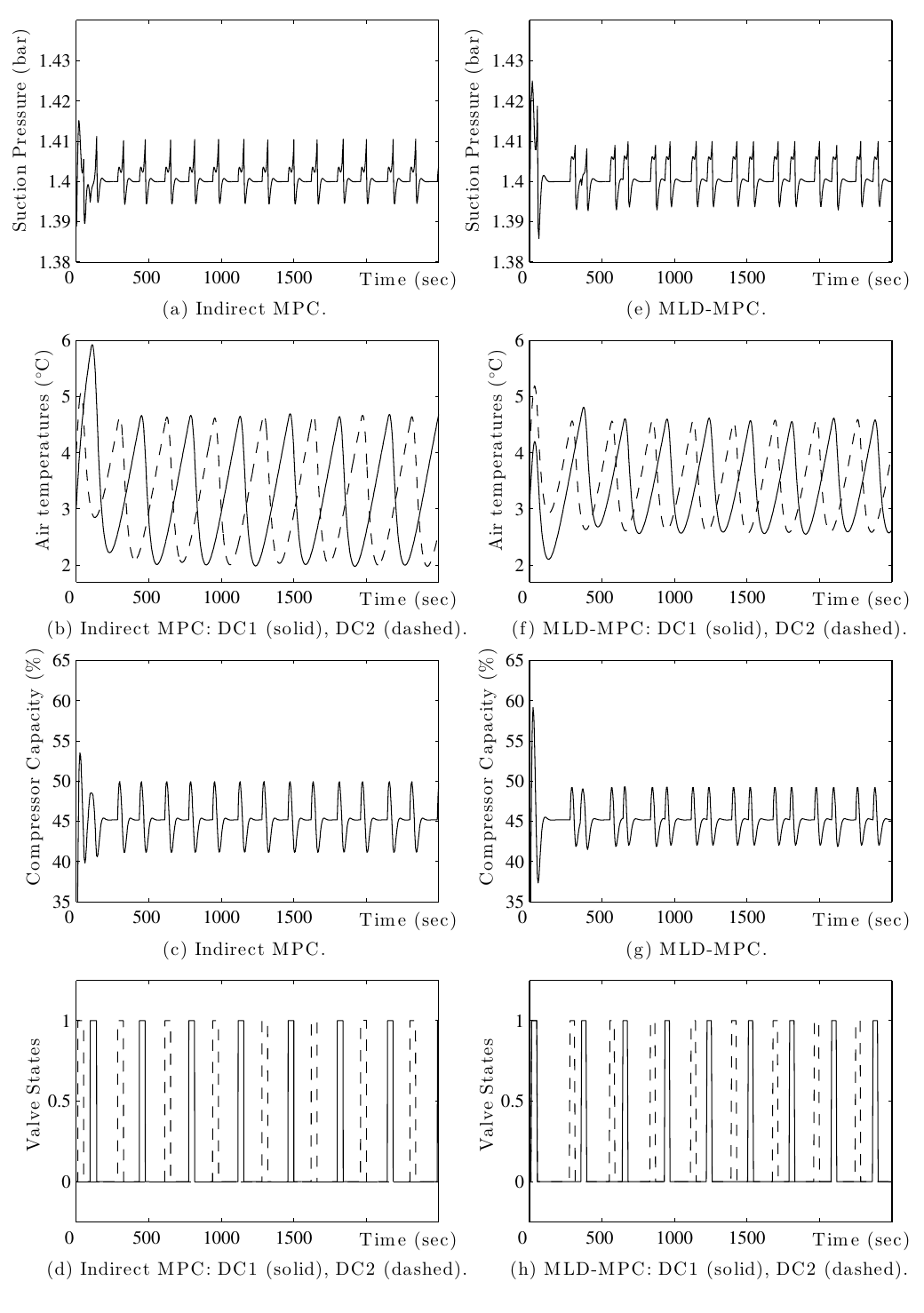}
\caption{Simulation of the indirect MPC and MLD-MPC methods for the refrigeration system with two display cases (denoted as DC1 and DC2).}
\label{FG:2}
\end{figure}

According to the Fig.~\ref{FG:2}, both of the MPC methods prevent from valve switching synchronization (that occur in the traditional controller). 
The closed loop time responses of the two methods are not exactly the same. The reason is that the calculations in MLD-MPC are in terms of the discretized time. But, the indirect MPC computes the optimal trajectories over the continuous time range of the prediction horizon which is more accurate. As a result, the valve switching times are distributed more evenly and regularly in the case of indirect MPC.

\subsection{Computational aspects}
The simulations of this section are carried out in the MATLAB\textsuperscript{\textregistered} 2017a environment on a PC with Intel\textsuperscript{\textregistered} Core\textsuperscript{TM} i7-4500U 1.8 GHz processor and 64 bit version of the Windows 7 operating system. The average and maximum values of controller execution times for simulation of 1000 controller execution steps are shown in the Table~\ref{TB:01} for the indirect MPC (Algorithm~\ref{ALG:1}) and the MLD-MPC. Several values of the prediction horizon $T_h$ and two different solver (for MLD-MPC) are considered. When $T_h$ increases to 100 sec, the average execution time of indirect MPC implemented in m-code becomes smaller than the average execution time of the MLD-MPC implemented using Gurobi v9.01 which is binary coded and is declared to be the fastest MIP solver \cite{R:Gurobi}. For $T_h = 200$ sec, the superiority of the indirect MPC is more than an order of magnitude. However, it is much more reasonable to compare the m-coded indirect MPC with MLD-MPC using miqp.m solver \cite{R:12} which is also written in m-code. In the case of this solver, no simulation progress was experienced after 3 hours for $T_h =200$ sec as indicated in the Table~\ref{TB:01}. 

The MLD model of the refrigeration system includes 7 real-valued auxiliary variables and 3 binary-valued auxiliary variables. As a result, the number of decision variables for the MIP which is solved at each step of the MLD-MPC method can be calculated as $7 n_h$ real plus $3 n_h$ binary variables with $n_h=T_h/T_c$. In general, a larger $T_h$ increases the number of MIP decision variables and the MLD-MPC execution time.
However, the controller execution time of indirect MPC is not always increasing with $T_h$ according to the Table~\ref{TB:01}. A considerable decrease occurs when moving from $T_h=50$ sec to $T_h=100$ sec. The reason is explained as follows.
The cost $J$ in \eqref{E:JC} can be divided into two parts: the cost of flows which involves the first term on the right side of \eqref{E:JC.1} and the cost of jumps (the remaining terms). 
The MPC controller forces the hybrid plant to make a jump if the difference in the cost of flows due to that jump is greater than the cost of that jump (i.e. a smaller $J$ is obtained by making that jump). If $T_h$ is too small, then the cost of flows will be also small such that its difference cannot become greater than the cost of a jump. Hence, the controller avoids jumps which results in loss of control and increase of the output errors along with time. Consequently, the optimal value of $J$ also increases at each time step. In summary, the control is lost if the cost function $J$ is selected inadequately. This argument is valid for both of the MPC methods. However, in the case of indirect MPC, the larger value of $J$ increases the number of iterations of the Algorithm~\ref{ALG:1} due to the Theorem~\ref{TH:5}. Such a condition happens for both $T_h = 50$ sec and $T_h=20$ sec which results in the larger controller execution times in the Table~\ref{TB:01}.

\begin{table}[!b]
\renewcommand{\arraystretch}{1.2}
\caption{Comparison of the controller execution times for the indirect MPC and the MLD-MPC.}
\label{TB:01}
\centering
\begin{tabular}{cccccc}
\hline
\hline
\multirow{2.4}{3.2cm}{MPC method} & \multirow{2.4}{1.2cm}{\centering } & \multicolumn{4}{c}{Prediction horizon $T_h$ (sec)} \\
\cmidrule{3-6}
 &  & 20 sec & 50 sec & 100 sec & 200 sec \\  
\hline
\hline
\multirow{2.4}{3.2cm}{Indirect~MPC (m-code)} & Avg. & 0.54 & 0.79 & 0.095 & 0.28 \\
\cmidrule{2-6}
& Max. & 3.67 & 4.02 & 1.03 & 9.4 \\
\hline
\multirow{2.4}{3.2cm}{MLD-MPC (miqp.m~solver~\cite{R:12})} & Avg. & 0.27 & 20.9 & 221.3 & $>$ 3 hours \\
\cmidrule{2-6}
& Max. & 0.38 & 35.2 & 269.9 & $>$ 3 hours \\
\hline
\multirow{2.4}{3.2cm}{MLD-MPC (Gurobi~solver~\cite{R:Gurobi})} & Avg. & 0.008 & 0.024 & 0.267 & 4.7 \\
\cmidrule{2-6}
 & Max. & 0.053 & 0.096 & 0.976 & 234.2 \\
\hline
\hline
\end{tabular}
\end{table}

The maximum and average values of some algorithm execution parameters during the simulations of the indirect MPC are also shown in the Table~\ref{TB:02}. These parameters include the number of algorithm iterations at each step, the final number of elements in the set $\mathcal{S}$ at each step, the total number of equations solved at each step (calls to either $\mathtt{JPMPa}$ or $\mathtt{JPMPb}$ functions), and the number of unknowns among all of the equations solved during the simulation. The fact that the load of indirect MPC increases if the control is lost (due to an inadequately selected cost function) also shows up in the all of the parameter values in the Table~\ref{TB:02}.

\begin{table}[!b]
\renewcommand{\arraystretch}{1.2}
\caption{Statistics of the iterations and variables of the Algorithm~\ref{ALG:1} during simulations.}
\label{TB:02}
\centering
\begin{tabular}{cccccc}
\hline
\hline
\multirow{2.4}{3.3cm}{Parameter} & \multirow{2.4}{1.2cm}{\centering } & \multicolumn{4}{c}{Prediction horizon $T_h$ (sec)} \\
\cmidrule{3-6}
 &  & 20 sec & 50 sec & 100 sec & 200 sec \\  
\hline
\hline
\multirow{2.4}{3.3cm}{Number of iterations} & Avg. & 4.4 & 5.3 & 1.8 & 3.2 \\
\cmidrule{2-6}
& Max. & 13 & 13 & 3 & 15 \\
\hline
\multirow{2.4}{3.3cm}{Size of the set $\mathcal{S}$} & Avg. & 9.8 & 11.7 & 4.6 & 7.4 \\
\cmidrule{2-6}
& Max. & 27 & 27 & 7 & 31 \\
\hline
\multirow{2.4}{3.3cm}{Number of equations solved} & Avg. & 35.4 & 47 & 6.7 & 19.5 \\
\cmidrule{2-6}
& Max. & 205 & 181 & 15 & 219 \\
\hline
\multirow{2.4}{3.3cm}{Number of equation unknowns} & Avg. & 1.76 & 1.8 & 1.27 & 1.5 \\
\cmidrule{2-6}
& Max. & 5 & 5 & 2 & 4 \\
\hline
\hline
\end{tabular}
\end{table}

To demonstrate an application of the Theorem~\ref{TH:5}, it is considered that the MPC algorithm performs better than the traditional controller in reducing the value of its cost functional $J$. Hence, the worst case value of $J$ obtained from a simulation of the traditional controller which is 784 for $T_h = 100$ sec is used for applying the theorem. The value of $n_a$ in (\ref{E:NSB2}) is calculated as $n_a=2$. We also have $h_{min}=w_7=100$ which gives an upper bound for the number of iterations of the Algorithm~\ref{A:1} as $2^{1+\lfloor 784/100\rfloor}-1=255$. For $T_h = 200$ sec, the upper bound on the number of iterations increases to $2^{1+\lfloor (2\times 784)/100\rfloor}-1=65535$.

\section{Conclusion}
\label{S:CN}
The main existing approach to the hybrid MPC uses a direct approach to solve the finite horizon optimal control in the MPC setup. It converts the problem to a mixed integer program with possibly a large number of decision variables. In this work, an MPC method was proposed based on the indirect solution approach using the extended version of the Pontryagin's maximum principle for hybrid systems. The central part of the method is an algorithm which iterates on the sequences of discrete state and discrete input values in order to compute the optimal inputs at every time step. The computations are reduced to solving an algebraic system of equations for the case of affine hybrid systems. The algorithm is guaranteed to terminate in a finite number of steps if the cost functional of the MPC assigns cost to the jumps. The proposed approach was applied to a benchmark hybrid system control problem as a case study during which comparisons were made with the main existing hybrid MPC method. The results verify the superior performance of the proposed MPC method, especially for larger values of the prediction horizon. 
It is expectable that the numerical efficiency of the current initial implementation of the proposed MPC method which is based on the MATLAB m-code language can be furtherly improved during the future works.
Several other issues, including stability analysis, handling of state and input constraints, and application of the method to more case studies are subjects for the future works.

\appendix
\section{Appendix A: Calculation of $\mathcal{Y}_a$ and $\mathcal{Y}_b$}
\label{AX:A}
In this appendix, a technique is proposed for reducing the dimensionality of equations that should be solved in the functions $\mathtt{JPMPa}$ and $\mathtt{JPMPb}$ for calculation of $\mathcal{Y}_a$ and $\mathcal{Y}_b$ in (\ref{E:FS.1}) and (\ref{E:FS.1m}) from $t^s_1,\cdots,t^s_{n-1}$. For brievity, a jump $(q_i,\sigma_i,q_{i+1})\in \Theta$ which appears as a subscript index of a matrix coefficient is replaced by $i$. Also, $\Psi_{q_i}(t^s_i-t^s_{i-1})$ in (\ref{E:LP.1}) is briefly denoted as $\Psi_i$. One can use the Equation (\ref{E:2.2}) with (\ref{E:AHS.3}) and the fact that $x_0^+=x_{ic}$ to write the following equations.

\begin{subequations}
\label{E:AL}
\begin{align}
\begin{bmatrix}x_0^+ \\ \lambda_0^+ \\ 1 \end{bmatrix} &= \Lambda_0 \begin{bmatrix}\lambda_0^+\\1\end{bmatrix}, 
\qquad 
\Lambda_0 = \begin{bmatrix}0&x_{ic}\\I&0\\0&1\end{bmatrix}
\label{E:AL0}
\\
\begin{bmatrix}x_i^+\\\lambda_i^+\\1\end{bmatrix} &= \Lambda_i \begin{bmatrix}x_i^-\\\lambda_i^+\\1\end{bmatrix},
\qquad \underset{\displaystyle{i\in[1..n-1]},\hfill}{\Lambda_i = \begin{bmatrix} L^x_{i}&0&L^c_{i} \\ 0&I&0 \\ 0&0&1 \end{bmatrix}}
\label{E:AL1}
\end{align}
\end{subequations}

Also, defining $\hat{M}^x_i$ such that $\hat{M}^x_i {M^x_{i}}^T = 0$, one can use (\ref{E:PMPj.1}), (\ref{E:GB}), (\ref{E:AHS}), and (\ref{E:QJ}) to write the following equations for every $i\in[1..n-1]$.

\begin{subequations}
\label{E:AO}
\begin{align}
&\Omega_i \begin{bmatrix} x_i^- \\ \lambda_i^+ \\ 1 \end{bmatrix} = \Pi_i \begin{bmatrix} x_i^- \\ \lambda_i^- \\ 1 \end{bmatrix}, \quad
\Pi_i = \begin{bmatrix} I&0&0 \\ 0&\hat{M}^x_i&0 \\ 0&0&0\\0&0&1\end{bmatrix} 
\label{E:AO2} \\
&\Omega_i = \begin{bmatrix} I&0&0 \\ \hat{M}^x_i W^{jx}_{i} & \hat{M}^x_i {L^x_{i}}^T & -\hat{M}^x_i W^{jx}_{i} \bar{x}_{q_i} \\ M^x_{i}&0&M^c_{i}\\0&0&1\end{bmatrix} 
\label{E:AO3}
\end{align}
\end{subequations}

Also, (\ref{E:PMPf.1}) is written as

\begin{align}
\label{E:AF}
\Omega_e \begin{bmatrix} x_n^- \\ \lambda_n^- \\ 1 \end{bmatrix} = 0, \quad \Omega_e = \begin{bmatrix}W^f_{q_n}&-I&-W^f_{q_n}\bar{x}_{q_n}\end{bmatrix} 
\end{align}

If $n=1$, then (\ref{E:AL0}), (\ref{E:LP.1}) for $i=1$, and (\ref{E:AF}) can be combined as $\Omega_e \Psi_1 \Lambda_0$ $ [\lambda_0^{+^T}~1]^T=0$. This equation can be solved for $\lambda_0^+$ from which $x_1^-$ and $\lambda_1^-$ are obtained using (\ref{E:LP.1}) for $i=1$. Otherwise, if $n>1$, one can relplace (\ref{E:AL}) in (\ref{E:LP.1}) and replace the result in (\ref{E:AO3}) and (\ref{E:AF}) to obtain the following equations. 

\begin{subequations}
\label{E:AR}
\begin{align}
&\Omega_1 \begin{bmatrix} x_1^- \\ \lambda_1^+ \\ 1 \end{bmatrix} = \Pi_1 \Psi_1 \Lambda_0 \begin{bmatrix}\lambda_0^+\\1\end{bmatrix} 
\label{E:AR1} \\
&\Omega_i \begin{bmatrix} x_i^- \\ \lambda_i^+ \\ 1 \end{bmatrix} = \Pi_i \Psi_i \Lambda_{i-1} \begin{bmatrix}x_{i-1}^-\\\lambda_{i-1}^+\\1\end{bmatrix}, \quad i\in[2..n-1] 
\label{E:AR2} \\
&\Omega_e \Psi_n \Lambda_{n-1} \begin{bmatrix}x_{n-1}^-\\\lambda_{n-1}^+\\1\end{bmatrix} = 0 
\label{E:AR3}
\end{align}
\end{subequations}

The number of rows in (\ref{E:AR1}) and (\ref{E:AR2}) is $2n_x+1$, where the equations given by the last rows are trivial. Hence, (\ref{E:AR}) is a system of $(2n-1)n_x$ linear equations in terms of $(2n-1)n_x$ unknowns in $\lambda_i^+ (0\le i<n)$ and $x_i^- (0<i<n)$. After solving this system of equations, $\lambda_i^- (0<i\le n)$, $x_i^+ (0<i<n)$, and $x_n^-$ can be obtained using (\ref{E:LP.1}), (\ref{E:AL1}), and (\ref{E:AF}) respectively. 

\section{Appendix B: Calculation of the cost functional}
\label{AX:B}
In the functions $\mathtt{JPMPa}$ ($\mathtt{JPMPb}$), the cost functional $J$ ($J_m$) should be calculated given $\mathcal{Y}_a$ ($\mathcal{Y}_b$).
In this appendix a method is proposed for calculating the part of $J$ that involve integrations on the right hand side of (\ref{E:JC.1}) denoted as $J_1$ in the following (other parts are already in terms of the elements in $\mathcal{Y}_a$ or $\mathcal{Y}_b$).  

\begin{align}
J_1= {\frac{1}{2}} \sum_{i=1}^n \int_{t^s_{i-1}}^{t^s_i} \underset{\displaystyle(u-\bar{u}_{q_i})^T W^u_{q_i} (u-\bar{u}_{q_i})+ W^c_{q_i}]dt}{[(x-\bar{x}_{q_i})^T W^x_{q_i} (x-\bar{x}_{q_i})+\hfill}
\end{align}

By defining $z^T(t)=[x^T(t)~\lambda^T(t)~1]$ and using (\ref{E:US}), the above equation together with (\ref{E:LP.4}) can be transformed into the following form.

\begin{subequations}
\label{E:BT}
\begin{align}
J_1&={\frac{1}{2}}\sum_{i=1}^n J_{1,i} 
\label{E:BT1} \\
J_{1,i} &= \int_{t^s_{i-1}}^{t^s_i} z^T(t) W^z_i z(t) dt, \quad i\in[1..n]  
\label{E:BT2} \\
\dot{z} &= A^e_{q_i} z, \quad t\in[t^s_{i-1},t^s_i), \quad i\in[1..n] 
\label{E:BT3} \\
W^z_i &= \begin{bmatrix} W^x_{q_i}&0&-W^x_{q_i}\bar{x}_{q_i} \\ 0&B^u_{q_i}W^{u^{-1}}_{q_i}{B^u_{q_i}}^T&0 \\ -\bar{x}_{q_i}^T W^x_{q_i}&0&W^c_{q_i}+\bar{x}_{q_i}^T W^x_{q_i}\bar{x}_{q_i} \end{bmatrix}
\label{E:BT4}
\end{align}
\end{subequations}

According to (\ref{E:BT3}) one can write $z(t)=e^{-A^e_{q_i}(t^s_i-t)} z_i^-$ which can be replaced in (\ref{E:BT2}) to obtain

\begin{align}
J_{1,i} &= z_i^{-^T} \int_{t^s_{i-1}}^{t^s_i} e^{-A^{e^T}_{q_i} (t^s_i-t)} W^z_i z(t) dt 
\end{align}

The above equation can be written as (\ref{E:BC1}) in the following, in which $\hat{z}$ evolves according to the differential equation $\dot{\hat{z}} = -A^{e^T}_{q_i} \hat{z} + W^z_i z$ with initial conditions $\hat{z}_{i-1}^+=0$. This differential equation together with (\ref{E:BT3}) can be represented as (\ref{E:BC2}) which is solved as (\ref{E:BC3}). 

\begin{subequations}
\label{E:BC}
\begin{align}
J_{1,i} &= z_i^{-^T} \hat{z}_i^- =
\begin{bmatrix} z_i^-\\ \hat{z}_i^- \end{bmatrix}^T \begin{bmatrix} 0&I \\ 0&0 \end{bmatrix} \begin{bmatrix} z_i^-\\ \hat{z}_i^- \end{bmatrix}
\label{E:BC1}\\
\frac{d}{dt}\begin{bmatrix} z\\ \hat{z} \end{bmatrix} &= A^g_{q_i} \begin{bmatrix} z\\ \hat{z} \end{bmatrix}, \quad A^g_{q_i} =\begin{bmatrix} A^e_{q_i}&0 \\ W^z_i&-A^{e^T}_{q_i} \end{bmatrix} 
\label{E:BC2}\\
\begin{bmatrix} z_i^-\\ \hat{z}_i^- \end{bmatrix} &= \Psi^g_i \begin{bmatrix} z_{i-1}^+\\ 0 \end{bmatrix}, \quad \Psi^g_i = e^{A^g_{q_i}} 
\label{E:BC3}
\end{align}
\end{subequations}

Replacing (\ref{E:BC3}) in (\ref{E:BC1}), $J_{1,i}$ for $i\in[1..n]$ is computed as in (\ref{E:BCr}).

\begin{subequations}
\label{E:BCr}
\begin{align}
J_{1,i} &= z_i^{-^T} \Psi^g_{i,21} z_{i-1}^{+^T}
\label{E:BC4}\\
\Psi^g_{i,21} &= \begin{bmatrix} 0&I \end{bmatrix} \Psi^g_i \begin{bmatrix} I \\ 0 \end{bmatrix}
\label{E:BC5}
\end{align}
\end{subequations}

\end{document}